\begin{document}

\newtheorem{definition}{\bf Definition}
\newtheorem{theorem}{\bf Theorem}
\newtheorem{lemma}{\bf Lemma}
\newtheorem{property}{\bf Property}
\newtheorem{corollary}{\bf Corollary}
\newtheorem{remark}{\bf Remark}
\newtheorem{step}{\bf Step}

\title{Resource Allocation in Wireless Powered Relay
Networks: A Bargaining Game Approach}

\IEEEoverridecommandlockouts

\author{\IEEEauthorblockN{Zijie Zheng\IEEEauthorrefmark{1},
Lingyang Song\IEEEauthorrefmark{1},
Dusit Niyato\IEEEauthorrefmark{2},
and Zhu Han\IEEEauthorrefmark{3},}\\
\normalsize{\IEEEauthorblockA{\IEEEauthorrefmark{1}School of Electrical Engineering and Computer Science, Peking University, Beijing, China\\}}
\normalsize{\IEEEauthorblockA{\IEEEauthorrefmark{2}School of Computer Engineering,
Nanyang Technological University, Singapore\\}}
\IEEEauthorblockA{\IEEEauthorrefmark{3}Electrical and Computer Engineering Department, University of Houston, Houston, TX, USA\\}
\thanks{Part of the material in this paper has been accepted by IEEE International Conference on Communications,
Malaysia, May 2016.~\cite{ZSNH-2016}}
}

\maketitle

\begin{abstract}
Simultaneously information and power transfer in mobile relay networks have recently emerged, where the relay can harvest the radio frequency (RF) energy and then use this energy for data forwarding and system operation. Most of the previous works do not consider that the relay may have its own objectives, such as using the harvested energy for its own transmission instead of maximizing transmission of the network. Therefore, in this paper, we propose a Nash bargaining approach to balance the information transmission efficiency of source-destination pairs and the harvested energy of the relay in a wireless powered relay network with multiple source-destination pairs and one relay. We analyze and prove that the Nash bargaining problem has several desirable properties such as the discreteness and quasi-concavity, when it is decomposed into three sub-problems: the energy transmission power optimization, the power control for data transmission and the time division between energy transmission and data transmission. Based on the theoretical analysis, we propose an alternating power control and time division algorithm to find a suboptimal solution. Simulation results clearly show and demonstrate the properties of the problem and the convergence of our algorithm.
\end{abstract}
\begin{keywords}
RF Charging, Wireless Relay Network, Resource Allocation, Game Theory
\end{keywords}

\vspace{-1mm}
\section{Introduction}
\vspace{-1mm}
Harvesting energy from the natural resources such as solar energy, wind energy and thermal energy has been proved to be a promising method to prolong the lifetime for networks without fixed power supply, such as wireless sensor networks~\cite{ZH-2013}. However, uncertainty and uncontrollability of environment usually result in that energy harvesting has less stable energy supply and might sometimes insufficient causing the outage of the equipments in the networks~\cite{EH-2006}. To overcome uncertainty and randomness of energy harvesting from environment, the technique of radio frequency~(RF) energy transfer is adopted recently and applied widely in wireless powered networks, such as in wireless sensor networks~\cite{NKA-2010}, in wireless
body area networks~\cite{ZJZZWC-2010} and in wireless charging systems~\cite{LNWKH-2015}. With RF energy transfer, wireless receivers can harvest energy through converting received signals from wireless transmitters into electricity and store it in batteries~\cite{SWIPT-2008}. Since the information can also be transmitted using radio frequency in wireless networks, the concept of simultaneous wireless information and power transfer~(SWIPT) has become a promising approach for energy and information delivery in wireless networks~\cite{SWIPT-2008}~\cite{SURVEY-2014}. \par
More recently, besides information and power transfer from a wireless node to another, SWIPT has also been extended to wireless relay networks~\cite{MM-2010,NZDK-2013,NZDK-2014,ZZH-2013,LGKK-2015,ZPZL-2015,MSS-2015,ZZ-2015,LZQ-2014}. It is known that relays can expand the coverage of the wireless networks, where the relays help forward the sources' information to the destinations far away. However, in reality, some relays, e.g., wireless sensors and mobile phones, may have limited battery reserves and need external energy supply to maintain the circuits system activeness and help transmit the data~\cite{MM-2010}. With SWIPT, relays can at first harvest energy from sources or other wireless transmitters, and then forward information to destinations with the harvested energy~\cite{NZDK-2013}.\par
The difficulty in realizing SWIPT effectively in such wireless powered relay networks is that practical circuits on the relays cannot harvest energy and extract data
from wireless signals at the same time~\cite{ZZH-2013}\cite{NZDK-2014}. Therefore, one of the key challenges is balancing the tradeoff between the quality of information transmission and wireless energy transfer. For example, more time or frequency resource allocated to source-to-relay information transmission results in less time to harvest energy by the relays, which might lead to relays' turnoff or energy outage. To maximize the throughput and avoid the energy outage of the relays, the time switching-based relaying~(TSR) protocol and power splitting-based relaying~(PSR) protocol were proposed in~\cite{NZDK-2013} and~\cite{NZDK-2014}, respectively, deciding the energy harvesting time ratio and power splitting ratio in the one-source-one-relay network. To cope with the system that multiple sources, authors in~\cite{LGKK-2015} designed different encode and decode protocols for multiple source energy and data transmission and compared their signal-to-error-ratios~(SERs). Moreover, TSR and PSR protocols were extended in~\cite{ZPZL-2015} to deal with the antenna selection problem for energy harvesting and information transmission, when the relay was equipped with a MIMO system. The relay selection problem was studied in~\cite{MSS-2015} when the networks include multiple relays, where the authors maximized the capacity under an energy transfer constraint. The concept of self-looping energy recycling was proposed very recently with full-duplex relays. The capacity was derived with the TSR protocol in~\cite{ZZ-2015} with full-duplex relays and the average rate was optimized in~\cite{LZQ-2014} when in combination of full-duplex relays and the MIMO system. \par
As mentioned above, most of the existing works considered the information transmission and energy transfer tradeoff to achieve a common objective such as maximizing the total capacity of the system. This setting is suitable only when all nodes belong to the same authority. However, similar to the conventional wireless relay networks~\cite{WHL-2009,ZDGDC-2014,add1-2}, in reality, nodes in the wireless powered relay networks have their own selfish goals, which usually conflict with each other. As for wireless powered relay networks, each source-destination pair cares about its information transmission efficiency,~i.e.~bits successfully transmitted per Joule energy cost. As for the relay, when the energy can be accumulated in the battery~\cite{KI-2014,CXL-2014,DPEP-2014}, the relay cares mostly about its energy benefit, i.e., the harvested energy left after helping sources transmit information. The residual harvested energy can be used by the relay, e.g., for its own data transmission or system operation. Thus, source-destination pairs and the relay hold contradicted goals where sources try to guarantee their transmitted energy to the relay are used to transmit sources data as much as possible. However, the relay wants to uses less harvested energy to help sources and to keep more for the relay's own uses. The separated and contradicted individual objectives of source-destination pairs and the relay require us to investigate how they negotiate and bargain with each other. Thus, it is natural to apply game theory to balance the objectives among different nodes~\cite{game-1994,game-2011,add2-1,add2-2}.\par
To balance the information transmission efficiency for source-destination pairs and the residual harvested energy for the relay, in this paper, we propose a Nash bargaining approach to obtain the Nash bargaining solution~\cite{bargaining-1999}, through optimizing power control and time allocation. We prove the non-convexity of the bargaining problem and simplify it through decomposing the problem into three sub-problems: the energy transmission power optimization, the power control for information transmission and the time division between information transmission and energy transmission. Based on the theoretical analysis of the each problem, we design an alternating power control and time division algorithm to reach a suboptimal solution of the Nash bargaining solution.\par
The main contributions of this paper are as follows:
\begin{itemize}
\item We are the first to introduce a Nash bargaining approach in wireless powered relay networks to balance the tradeoff between the energy efficiency of information transmission of source-destination pairs and the residual harvested energy of the relay.
\item The bargaining problem decomposition is not only physically meaningful but also mathematically tractable, where the subproblem of energy harvesting power optimization is proved to be discretizable. Moreover, the subproblems of time division and information transmission are all proved to be quasi-concave.
\item Simulation results corroborate the properties or subproblems and the convergence of our algorithm. In addition, the utility imbalance in the Nash bargaining solution is also illustrated through simulation.
\end{itemize}
\par
The rest of this paper is organized as follows. The system model and problem formulation are presented in Section II. In Section III, we theoretically analyze and decompose the problem. Based on the analysis, we present our algorithm in Section IV. Section V includes the numerical simulation results. Finally, we conclude this paper in Section VI.\par

\vspace{-2mm}
\section{System Model}
\begin{figure}[!t]
\centering
\includegraphics[width=3.3in]{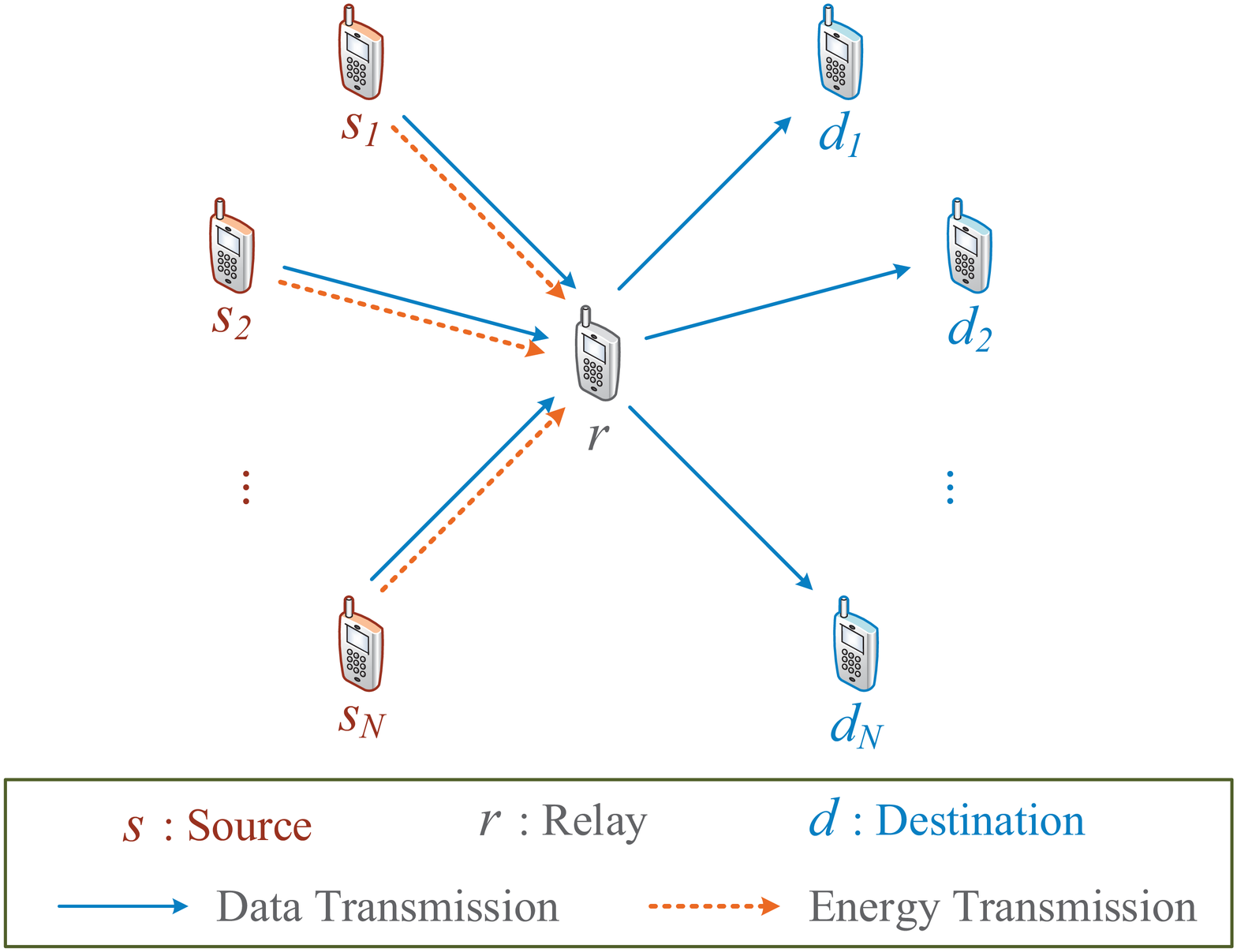}
\caption{System model for wireless powered relay networks.}
\label{fig:sys}
\end{figure}
We consider a wireless powered communication system as shown in Fig.~\ref{fig:sys}. For simplicity, we suppose that the system includes $N$ source-destination pairs denoted by $\mathcal{(S,D)}=\{(s_1,d_1),(s_2,d_2),\ldots,(s_N,d_N)\}$ and an RF-powered cooperative relay, denoted by $r$. More complicated problems in practical networks with multiple relays, such as relay selection, interference control and spectrum resource allocation are not considered in this paper, which is left for the future work. Sources transmit data to their destinations with the
help of this relay (which does not have its own data to transmit). In this paper, the relay is RF-powered~\cite{SURVEY-2014}. The half-duplex time-switching relaying protocol~(TSR)~\cite{NZDK-2013} is adopted to support the relay's energy harvesting from sources and data forwarding. Furthermore, the direct link is assumed to be ignored between each source-destination pair~\cite{CHXGCHLY-2013}. More details on the channel model, TSR protocol, utility of each node and problem formulation are given as follows.
\subsection{Channel Model}
The channel between each pair of nodes is modeled as a quasi-static flat fading channel~\cite{KTS-2012}, where the channel is constant over a block time $T$ and independent from one block to the next. Without loss of generality, the channel fading in each block follows a Rayleigh distribution. The use of such
channels is motivated by prior research in the wireless powered networks~\cite{ZZH-2013,NZDK-2014,HZ-2012}. For simplicity, it is also assumed that the channel state information is available at the relay and destinations.\par
\begin{figure*}[!t]
\centering
\includegraphics[width=5.5in]{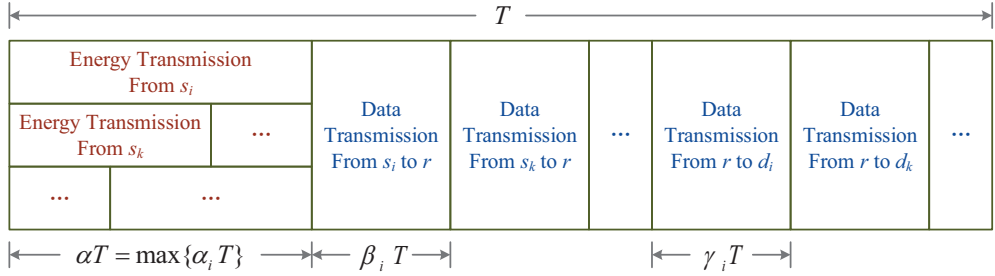}
\caption{TSR protocol.}
\label{fig:dftsr}
\end{figure*}
\subsection{TSR Protocol}
Fig.~\ref{fig:dftsr} describes the block structure in the TSR protocol for energy harvesting and information forwarding at the relay. It is a direct extension from the protocol in~\cite{NZDK-2013} and~\cite{NZDK-2014} in support of the case where the relay is able to assist multiple sources inside one time block $T$. The relay first harvests the energy from its assisting sources simultaneously. Then the relay receives signals from sources and then decodes and forwards~(DF) the data to its assisting destinations sequentially with the harvested energy~\cite{NZDK-2014,AE-2016,NNZKD-2016,FZJQS-2015}. As shown in Fig.~\ref{fig:dftsr}, $\alpha T$ denotes the fraction of the block time where the relay $r$ harvests energy, which is determined by the energy transmission time of each of its assisting source~$s_i$, denoted by $\alpha_i T$. We have $\alpha T=\max_{ (s_i, d_i) \in (\mathcal{S},\mathcal{D}) }\alpha_i T$, which indicates that the relay does not begin transmitting data before accomplishing energy harvesting from all its assisting sources. $\beta_iT$ is the time for source $s_i$ to relay $r$ data transmission, and $\gamma_iT$ is the time for information forwarding from $r$ to $d_i$. Thus, the following constraint is naturally satisfied:
\begin{equation}
\max_{\alpha_i}~(\alpha_iT)+\sum_{ (s_i, d_i) \in (\mathcal{S},\mathcal{D}) } (\beta_{i}+\gamma_{i})T=T.
\label{equ:timesplitting}
\end{equation}
In this paper, we assume that the following constraint must also be satisfied to guarantee the quality of service for data transmission of each
source-destination pair. Each source-destination pair has at least $\theta_0 T$ unit time for data transmission in this system:
\begin{equation}
(\beta_i+\gamma_i)T\geq\theta_0T.
\label{equ:timeconstraint}
\end{equation}
\subsection{Utility of Each Node}
\subsubsection{Source-Destination Pair Utility} For each source-destination pair $(s_i,d_i)$, we adopt the \emph{data transmission efficiency}~\cite{NLS-2013} as its gain, which is defined as the total average number of bits successfully transmitted
from $s_i$ to $d_i$ per Joule unit of energy consumed by source $s_i$, in each block time. Assuming that the relay
The data transmission efficiency $U_i^S$ for ($s_i,d_i$) can
be calculated as follows:
\begin{equation}
\small
\left\{
\begin{aligned}
&U_i^S=\frac{C_i\cdot T}{P^{s1}_i\alpha_iT+P^{s0}_i\beta_iT}=\frac{C_i}{P^{s1}_i\alpha_i+P^{s0}_i\beta_i},\\
&C_i= \min \left\{\beta_i\log\left(1+\frac{P^{s0}_i|h_{i}|^2}{\sigma^2}\right),\gamma_i\log\left(1+\frac{P^r|g_{i}|^2}{\sigma^2}\right)\right\},\\
\end{aligned}
\right.
\label{equ:sourceutility}
\end{equation}
where $C_i$ is the average capacity for source $s_i$ transmitting data to $d_i$ via relay $r$ with the TSR protocol, $P^{s0}_i$ is the transmission power of source $s_i$ for data transmission and $P^{s1}_i$ is the transmission power of source $s_i$ for energy transmission, which can be different from $P^{s0}_i$, $P^r$ is the transmission power of relay $r$, $h_{i}$ is the channel gain from $s_i$ to $r$, and $g_{i}$ denotes the channel gain from $r$ to $d_i$. Since either in the energy transmission, or in the data transmission, the transmission power cannot exceed the physical power upper bound of each equipment, we have the power constraints as follows:
\begin{equation}
\left\{
\begin{aligned}
&0\leq P^{s1}_i\leq P^{s}_0,\\
&0\leq P^{s0}_i\leq P^{s}_0,\\
&0\leq P^r\leq P^r_0,\\
\end{aligned}
\right.
\label{equ:powerconstraint}
\end{equation}
where $P^{s}_0$ is the maximum energy~(or information) transmission power for each source, and $P^{r}_0$ is the maximum transmission power for the relay.\par
\subsubsection{Relay Utility} For relay node $r$, its utility is the residual energy after energy harvesting and data forwarding, which can be used to execute other applications for itself, e.g., sensing or transmitting data in sensor networks.
The harvested energy at relay $r$ per block time can be calculated as
\begin{equation}
E=\eta \sum_{(s_i,d_i)\in (\mathcal{S},\mathcal{D})} \alpha_iT P^{s1}_i|h_{i}|^2,
\label{equ:eh}
\end{equation}
where $\eta\in(0,1)$ is the energy conversion efficiency which depends on the energy harvesting circuits~\cite{ZZH-2013}.
The energy cost in relay $r$ per each block time can be calculated as follows:
\begin{equation}
\varphi= \sum_{(s_i,d_i)\in (\mathcal{S},\mathcal{D})} \left( \gamma_iT P^r\right)+E_0,
\label{equ:relaycost}
\end{equation}
where the first term is the energy cost for data transmission and $E_0$ is a constant, which denotes the energy cost for signal decoding and recoding in the assistance of data forwarding. Hence, the utility of relay $r$ is denoted by $U^R$, which is calculated as follows:
\begin{equation}
U^R=E-\varphi.
\label{equ:relayutiliy}
\end{equation}

\subsection{Game Theoretic Problem Formulation}
Both the source-destination pairs and the relay aim to achieve the highest utility. However, the more available energy at the relay requires more time for energy transmission (with larger $\{\alpha_i\}$) and less time for data forwarding~( with smaller $\{\beta_i\}$ and $\{\gamma_i\}$). This usually results in the lower utilities for the source-destination pairs. Apart from the influence of time division, the transmission power from sources (such as $P^{s1}_i$ and $P^{s0}_i$ in (\ref{equ:sourceutility}) and (\ref{equ:eh})) and the transmission power from the relay (such as $P^{r}$ in (\ref{equ:sourceutility}) and (\ref{equ:eh})) jointly affect the utilities of source-destination pairs and the relay. Since the utilities of the sources and the utility of the relay are obviously contradicted, a Nash bargaining game~\cite{bargaining-1999} is adopted in this paper and the Nash bargaining solution~\cite{bargaining-1999} is considered as a reasonable solution to balance the utilities of sources and the utility of the relay.\par
In the Nash-bargaining game, the source-destination pairs and the relay are modeled as $N+1$ players, where for convenience the relay is denoted as the $({N+1})^{th}$ player. The strategy of each player (each source-destination pair and the relay) consists of all the variables that may influence the value of the player's utility. Even though the utility of each player does not directly contain all variables as in~(\ref{equ:sourceutility}) and~(\ref{equ:relayutiliy}), the constraint in~(\ref{equ:timesplitting}) and the utility function in~(\ref{equ:sourceutility}) make all variables correlated. Therefore, each player needs to consider all the variables in its strategy. That is to say, each player's strategy consists of the transmission power of sources $\{P^{s1}_i\}$, $\{P^{s0}_i\}$, the transmission of the relay $P^{r}$, and the time division variables $\{\alpha_i\}$, $\{\beta_i\}$, and $\{\gamma_i\}$. The strategy of the $n$'s player is denoted by $\Upsilon_n$, which is given below.
\begin{equation}
\Upsilon_n=\left\{\{P^{s1}_i\}, \{P^{s0}_i\}, P^{r}, \{\alpha_i\}, \{\beta_i\}, \{\gamma_i\} \right\}_n.
\end{equation}
The objective function for a Nash bargaining game is defined as a Nash bargaining solution, which can satisfy all the players in the game. In a Nash bargaining solution, all players reach a consensus of their strategies to maximize the product of utilities of all players\cite{bargaining-1999}. Mathematically, the Nash bargaining solution in this paper is defined as follows.
\begin{equation}
\begin{aligned}
&\max_{\{P^{s0}_i\},\{P^{s1}_i\},P^r,\{\alpha_i\},\{\beta_i\},\{\gamma_i\}}\Phi=\left(\prod_{(s_i,d_i)}{U^S_i}\right)U^R,\\
&~~~~~~~~~~~~~~~s.t.~U^S_i>0,~U^R>0,\\
&~~~~~~~~~~~~~~~~~~~~\Upsilon_1=\Upsilon_2=\ldots=\Upsilon_N=\Upsilon_{N+1}\\
&~~~~~~~~~~~~~~~~~~~~=\left\{\{P^{s1}_i\}, \{P^{s0}_i\}, P^{r}, \{\alpha_i\}, \{\beta_i\}, \{\gamma_i\} \right\},\\
&~~~~~~~~~~~~~~~~~~~~\mbox{Constraints~in}~(\ref{equ:timesplitting}),(\ref{equ:timeconstraint}),(\ref{equ:powerconstraint}),
\end{aligned}
\label{equ:obj}
\end{equation}
where $U^S_i$ and $U^R$ are given in~(\ref{equ:sourceutility}) and~(\ref{equ:relayutiliy}), respectively.\par
We emphasize that the main concerning in this paper is providing some static insights in a Nash bargaining solution in the wireless powered relay networks instead of designing a dynamic distributed process. Therefore, we assume that throughout this paper, the bargaining process is simulated in a centralized way on a randomly selected source. After the selected source achieves the values of $\{P^{s0}_i\},\{P^{s1}_i\},P^r,\{\alpha_i\},\{\beta_i\}$ and $\{\gamma_i\}$ in a Nash bargaining solution, it sends the values of $P^{s0}_i$ and $P^{s1}_i$  to each source $s_i$ through the relay to tell $s_i$ how to set the transmission power. In addition, the relay also achieve the the values of $\alpha_i$, $\beta_i$, and $\gamma_i$ from the selected source and sends them to each source $s_i$ to tell $s_i$ how to allocate the time between the energy transmission and data transmission.
\par
\section{Problem Analysis}
\label{sec:problemanalysis}
In this section, we first highlight that the Nash bargaining problem in this paper is not guaranteed to be a concave~(convex) problem, which is hard to obtain a strictly global optimal solution. Therefore, to find a suboptimal solution, we decompose the problem into three parts:~1)~energy transmission power optimization,~i.e.,~how to set $P_i^{s1}$; 2)~data transmission power optimization,~i.e.,~how to set $P_i^{s0}$ and $P^r$; and~3)~time division optimization,~i.e.,~how to set $\alpha_i$, $\beta_i$ and $\gamma_i$. We find this decomposition is not only traditionally meaningful,\footnote{When the scenario is complicated, it is a traditional trend of thoughts to decompose the time division and power control problem in dealing with resource allocation problems in wireless communication.} but also mathematically tractable that each problem holds its specific properties which can help simplify the process of the algorithm design to find a suboptimal result of the problem in~(\ref{equ:obj}).\par
\subsection{Non-Concavity of The Original Problem}
In this subsection, we prove that the original problem is not a concave~(convex) problem~(Property~\ref{the:notconcave}), which is hard to find a strictly optimal solution~\cite{convex-2004}.\par
\begin{lemma}
The objective function in (\ref{equ:obj}) is maximized, if and only if the following equation is satisfied for each source-destination pair:
\begin{equation}
\beta_i \log\left(1+\frac{P_i^{s0}|h_i|^2}{\sigma^2}\right)=\gamma_i \log\left(1+\frac{P^r|g_i|^2}{\sigma^2}\right).
\label{equ:onemanydatatrans}
\end{equation}
That is to say, at the Nash bargaining solution, the relay transmits all the data which it receives from its assisting sources.
\label{lem:datatrans}
\end{lemma}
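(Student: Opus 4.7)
The plan is to prove (\ref{equ:onemanydatatrans}) by a local perturbation argument. Write $A_i = \beta_i\log(1+P_i^{s0}|h_i|^2/\sigma^2)$ and $B_i = \log(1+P^r|g_i|^2/\sigma^2)$, so that $C_i = \min\{A_i,\gamma_i B_i\}$ and (\ref{equ:onemanydatatrans}) becomes $A_i=\gamma_i B_i$. The key structural observation I would rely on is that $\gamma_i$ enters $\Phi$ only through $C_i$ (inside $U_i^S$) and linearly through $\varphi$ in $U^R$, while $\beta_i$ enters through $A_i$ inside $C_i$ and through the denominator $P_i^{s0}\beta_i$ of $U_i^S$. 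This separation of influence is what makes single-coordinate perturbations clean.

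Assume for contradiction that some index $i$ violates (\ref{equ:onemanydatatrans}) at a maximizer of (\ref{equ:obj}). If $A_i<\gamma_i B_i$, then $C_i=A_i$ and $\gamma_i$ does not appear in $U_i^S$; replacing $\gamma_i$ by $\gamma_i'=A_i/B_i<\gamma_i$ leaves every $U_j^S$ untouched but decreases $\varphi$ by $(\gamma_i-\gamma_i')TP^r>0$, strictly raising $U^R$ and hence $\Phi$. If instead $A_i>\gamma_i B_i$, then $C_i=\gamma_i B_i$ is insensitive to $\beta_i$; replacing $\beta_i$ by $\beta_i'=\gamma_i B_i/\log(1+P_i^{s0}|h_i|^2/\sigma^2)<\beta_i$ leaves every $U_j^S$ for $j\neq i$ and $U^R$ unchanged while strictly shrinking the denominator of $U_i^S$, again raising $\Phi$. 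Either conclusion contradicts optimality, so (\ref{equ:onemanydatatrans}) must hold for every $i$; sufficiency follows from the same accounting, since once (\ref{equ:onemanydatatrans}) holds neither of the two reallocations is profitable anymore.

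The main obstacle will be discharging feasibility of the perturbation against the coupled constraints (\ref{equ:timesplitting}), (\ref{equ:timeconstraint}) and (\ref{equ:powerconstraint}). Reducing $\gamma_i$ or $\beta_i$ creates slack in the time budget (\ref{equ:timesplitting}); because the freed time is physically idle I would adopt the natural $\leq T$ reading of (\ref{equ:timesplitting}), or equivalently dump the slack into any $\alpha_j$ with $j\neq\arg\max_k\alpha_k$, which keeps $\max_k\alpha_k$ fixed and only raises the harvested energy $E$, further improving $U^R$. The power bounds in (\ref{equ:powerconstraint}) are untouched by the perturbation. The QoS constraint (\ref{equ:timeconstraint}) can become tight only if $\beta_i+\gamma_i=\theta_0$ at the candidate maximizer; in that edge case I would shift the missing time onto the other variable of the pair (raising $\beta_i$ in Case A or $\gamma_i$ in Case B by the same amount as the decrease) and verify via a first-order expansion in the perturbation size that the strict improvement in $\Phi$ survives for sufficiently small changes, since the dominant linear term comes from the factor that was truly slack. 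Handling this edge case cleanly is the step I would spend the most care on; the rest is bookkeeping.
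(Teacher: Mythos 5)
Your argument is essentially the paper's own proof: in the case $\beta_i\log(1+P_i^{s0}|h_i|^2/\sigma^2)<\gamma_i\log(1+P^r|g_i|^2/\sigma^2)$ shrink $\gamma_i$ to raise $U^R$, and in the opposite case shrink $\beta_i$ to raise $U_i^S$, so the objective is maximized only at equality. You go further than the paper in worrying about reabsorbing the freed time under the equality constraint (\ref{equ:timesplitting}) and the QoS edge case (\ref{equ:timeconstraint}), which the paper's proof silently skips; that extra bookkeeping is welcome but does not change the route.
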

\begin{proof}
When ${\beta_i \log\left(1+\frac{P_i^{s0}|h_i|^2}{\sigma^2}\right)<\gamma_i \log\left(1+\frac{P^r|g_i|^2}{\sigma^2}\right)}$, the relay can lower down its transmission time $\gamma_iT$, inducing a larger ${\left(\eta\sum_{x_i=1}(\alpha_i T P^{s1}_i|h_i|^2)-(\sum_{x_i=1}\gamma_i T P^r+E_0)\right)}$ in (\ref{equ:relayutiliy}). On the other hand, if~${\beta_i \log\left(1+\frac{P_i^{s0}|h_i|^2}{\sigma^2}\right)>\gamma_i \log\left(1+\frac{P^r|g_i|^2}{\sigma^2}\right)}$, the source can lower down its transmission time $\beta_iT$, inducing a larger $1/(\alpha_i P^{s1}+\beta_i P^{s0})$, which increases the objective function. Overall, the objective function is maximized, if and only if $\beta_i \log\left(1+\frac{P_i^{s0}|h_i|^2}{\sigma^2}\right)=\gamma_i \log\left(1+\frac{P^r|g_i|^2}{\sigma^2}\right)$ is satisfied.
\end{proof}
\begin{property}
The problem  (\ref{equ:obj}) is not guaranteed to be a concave~(convex) optimization problem with variables~ $\{\alpha_i\},\{\beta_i\},\{\gamma_i\},\{P^{s0}_i\},\{P^{s0}_i\}$ and $P^{r}$.
\label{the:notconcave}
\end{property}
\begin{proof}
Based on Lemma~\ref{lem:datatrans} and the power constraint $P^r\leq P^{r}_0$ as mentioned in Section II.C, we have the constraint set as follows:
\begin{equation}
\frac{\sigma^2}{|g_i|^2}\left(\left(1+\frac{P^{s0}_i|h_i|^2}{\sigma^2}\right)^{\frac{\beta_i}{\gamma_i}}-1\right)\leq P^{r}_0.
\end{equation}
It is not a convex set for $P^{s0}_i$ and $\beta_i$. Hence, the problem in (\ref{equ:obj}) is not a strict concave~(convex) optimization problem~\cite{convex-2004}.
\end{proof}
\subsection{Dedicators and Enjoyers for Energy Transmission}
In this subsection, we prove that at the Nash bargaining solution, the sources can be divided into two groups as \emph{dedicators} and \emph{enjoyers}. The definitions of the dedicator and the enjoyer are given as follows:
\begin{definition}
Source $s_i$ is defined as a \emph{dedicator} when source~$s_i$ transmits energy with full transmission power,~i.e.,~ $P_i^{s1}=P_0^{s}$.
\end{definition}
\begin{definition}
Source $s_i$ is defined as an \emph{enjoyer} when source~$s_i$ never transmits energy to the relay $r$,~i.e.,~$P_i^{s1}=0$.
\end{definition}
\par
Then, we have the theorem as follows:
\begin{theorem}
The objective function in (\ref{equ:obj}) is maximized if and only if $P_i^{s1}=P_0^s$ or $P_i^{s1}=0$ for each source $s_i$. That is to say, each source in the system is either a \emph{dedicator} or an \emph{enjoyer}.
\label{the:dedicator}
\end{theorem}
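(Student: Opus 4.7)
The plan is to freeze every variable except $P_i^{s1}$ at a candidate optimum of~(\ref{equ:obj}) and show that $\Phi$ then collapses to a linear-fractional function of the single variable $P_i^{s1}$; because such a function is monotonic on any interval not containing its pole, the maximum over $[0,P_0^s]$ must be attained at an endpoint.

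First I would isolate how $P_i^{s1}$ enters $\Phi=U^R\prod_j U^S_j$. Inspecting~(\ref{equ:sourceutility}),~(\ref{equ:eh}) and~(\ref{equ:relayutiliy}), one sees that $C_i$ is free of $P_i^{s1}$, as are all $U^S_j$ with $j\neq i$; the variable enters only through the denominator term $\alpha_iP_i^{s1}$ of $U^S_i$ and through the additive term $\eta\alpha_iT|h_i|^2P_i^{s1}$ of $U^R$. Collecting constants gives
\[
\Phi(P_i^{s1})=\frac{M(c+dP_i^{s1})}{a+bP_i^{s1}},
\]
where $a=\beta_iP_i^{s0}$, $b=\alpha_i$, $d=\eta\alpha_iT|h_i|^2$, $M=C_i\prod_{j\neq i}U^S_j$, and $c$ is the value of $U^R$ at $P_i^{s1}=0$. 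By the feasibility assumptions $a>0$ and $M>0$, while $b,d\ge 0$.

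Second I would differentiate,
\[
\frac{d\Phi}{dP_i^{s1}}=\frac{M(da-bc)}{(a+bP_i^{s1})^2},
\]
whose sign is independent of $P_i^{s1}$, so $\Phi$ is monotonic on $[0,P_0^s]$. A short case split completes the argument: if $da>bc$ then $\Phi$ is strictly increasing and the optimum is at $P_i^{s1}=P_0^s$ (a \emph{dedicator}); if $da<bc$ then $\Phi$ is strictly decreasing and the optimum is at $P_i^{s1}=0$ (an \emph{enjoyer}); and if $da=bc$ then $\Phi$ is independent of $P_i^{s1}$ and either endpoint is optimal. No interior point can strictly beat both endpoints. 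The degenerate case $\alpha_i=0$ eliminates $P_i^{s1}$ from the objective altogether, so we may set $P_i^{s1}=0$ without loss and count $s_i$ as an enjoyer. Performing this single-variable argument coordinate by coordinate over $i\in\{1,\dots,N\}$ yields the stated dichotomy.

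The main obstacle I anticipate is conceptual rather than computational: the objective is a product of $N+1$ interdependent utilities, but the proof requires noticing that perturbing a single $P_i^{s1}$ moves only two of those factors and only in an affine way, which reduces the problem to a one-dimensional linear-fractional monotonicity check. A minor verification is also needed to see that the positivity constraints $U^S_i>0$ and $U^R>0$ do not obstruct moving to an endpoint: $U^S_i>0$ is preserved since $C_i>0$ and the denominator stays positive, while $U^R$ is nondecreasing in $P_i^{s1}$, so positivity is only at risk when one decreases to $P_i^{s1}=0$, a move chosen only when $da<bc$, which (since $d,a\ge 0$) forces $bc>0$ and hence $c>0$, so $U^R(0)=c>0$ as required.
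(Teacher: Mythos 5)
Your proposal is correct and follows essentially the same route as the paper's Appendix A: the paper likewise freezes all variables except $P_i^{s1}$, rewrites $\Phi$ as a linear-fractional (affine-over-affine) function of $\alpha_iP_i^{s1}$ with coefficients $A_1,\dots,A_5$ independent of $P_i^{s1}$, and concludes by monotonicity that the optimum sits at an endpoint of $[0,P_0^s]$. Your added checks on the positivity constraints and the degenerate $\alpha_i=0$ case are small refinements the paper omits, not a different argument.
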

\begin{proof}
See Appendix \ref{app:dedicator}.
\end{proof}
\par
Then, we present Corollary~\ref{cor:ehtime} and Corollary~\ref{cor:ehorder}, respectively, to emphasize the energy transmission time property for the dedicators and which sources become more likely to be dedicators. These two corollaries can help simplify the process to decide the energy transmission time and the dedicator selection, which is shown in the algorithm in Section~\ref{sec:algorithm}.\par
\begin{corollary}
For each dedicator $s_i$~(with $P_i^{s1}=P_0^{s}$), its energy transmission time is $\alpha_iT=\max_{j=1}^{N}\{\alpha_jT\}$. That is to say, all dedicators spend the same time duration for energy transmission.
\label{cor:ehtime}
\end{corollary}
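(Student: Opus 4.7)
The plan is to fix every decision variable except $\alpha_i$ for a generic dedicator $s_i$ and to show that $\Phi$ turns out to be a linear-fractional function of $\alpha_i$ whose derivative has constant sign, so any optimizer of $\Phi$ in $\alpha_i$ sits at an endpoint. Because $s_i$ is a dedicator, $P_i^{s1}=P_0^s$, and $\alpha_i$ enters $\Phi=\bigl(\prod_j U_j^S\bigr)U^R$ only through the denominator of $U_i^S$ in~(\ref{equ:sourceutility}) and the $s_i$-indexed term of the harvested energy in~(\ref{equ:eh}); every other $U_j^S$ is unaffected, and by Lemma~\ref{lem:datatrans} the numerator $C_i$ depends on $\beta_i,\gamma_i,P_i^{s0},P^r$ but not on $\alpha_i$. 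Writing $B=\eta\sum_{j\ne i}\alpha_jTP_j^{s1}|h_j|^2-\sum_j\gamma_jTP^r-E_0$ for the part of $U^R$ that excludes $s_i$'s harvested energy and $K_1=C_i\prod_{j\ne i}U_j^S>0$ for the $\alpha_i$-free prefactor, one obtains
\begin{equation*}
\Phi(\alpha_i)=K_1\,\frac{B+\eta T P_0^s|h_i|^2\,\alpha_i}{P_i^{s0}\beta_i+P_0^s\,\alpha_i}.
\end{equation*}

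Next I would differentiate this rational function. A direct calculation gives
\begin{equation*}
\frac{d\Phi}{d\alpha_i}=\frac{K_1\,P_0^s\bigl(\eta T|h_i|^2P_i^{s0}\beta_i-B\bigr)}{\bigl(P_i^{s0}\beta_i+P_0^s\alpha_i\bigr)^2},
\end{equation*}
whose sign does not involve $\alpha_i$. Hence $\Phi$ is strictly monotone in $\alpha_i$, and the optimum must lie at an endpoint of the feasible interval. The key structural remark is that as long as $\alpha_i\le\max_j\alpha_j$, varying $\alpha_i$ leaves the time equality~(\ref{equ:timesplitting}) untouched, since $\max_j\alpha_j$ does not change; so under a perturbation that moves only $\alpha_i$ the feasible interval is exactly $[0,\max_j\alpha_j]$.

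Finally I would rule out the lower endpoint for an actual dedicator. If the constant sign above were negative the optimum would force $\alpha_i=0$, in which case $s_i$ contributes no harvested energy even though $P_i^{s1}=P_0^s$, and its effect is indistinguishable from that of an enjoyer in the sense of Theorem~\ref{the:dedicator}. In any non-degenerate dedicator case ($\alpha_i>0$) the derivative must therefore be non-negative, so the upper endpoint is reached and $\alpha_iT=\max_j\alpha_jT$, as claimed. The main obstacle I expect is scoping the perturbation correctly: one might try to push $\alpha_i$ past $\max_{k\ne i}\alpha_k$, but that enlarges the harvesting window itself and forces compensating reductions of some $\beta_j+\gamma_j$, a coupled trade-off that belongs to the separate time-division subproblem; the corollary should be read as a statement about the slice of the feasible region on which the harvesting window is held fixed, and on that slice the monotonicity argument delivers the conclusion cleanly.
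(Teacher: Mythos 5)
Your proof is correct and follows essentially the same route as the paper: both arguments reduce $\Phi$ to a linear-fractional (hence monotone) function of $\alpha_i$ with everything else held fixed, and both tie the sign of the derivative to the dedicator condition of Theorem~\ref{the:dedicator}, since $\alpha_i$ and $P_i^{s1}$ enter only through their product $\alpha_iP_i^{s1}$, so the increasing branch that forces $P_i^{s1}=P_0^s$ is exactly the branch that pushes $\alpha_i$ to the upper endpoint $\max_j\alpha_j$. Your treatment of the feasible interval and of ruling out the lower endpoint is in fact somewhat more explicit than the paper's one-line appeal to the monotonicity of~(\ref{equ:ehmonotone}).
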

\begin{proof}
According to Theorem~\ref{the:dedicator}, for $P_i^{s1}=P_0^{s}$, function~(\ref{equ:ehmonotone}) is a monotonically increasing function for $P_i^{s1}$. Under this condition, it is a monotonically increasing function with respect to $\alpha_i$. Thus, each source $s_i$ tends to maximize~$\alpha_i$ in maximizing the objective function~(\ref{equ:obj}). Hence, we have~$\alpha_i=\max_{j=1}^{N}\{\alpha_j\}$.
\end{proof}

\begin{corollary}
For each pair of sources $s_i$ and $s_j$, $P_{i}^{s1}\geq P_{j}^{s1}$ is satisfied if and only if $\beta_{i} P_i^{s0}\geq \beta_{j} P_j^{s0}$ and $|h_1|^2\geq|h_2|^2$. That is to say, the source is more likely to do energy transmission as a dedicator, when it has a better channel condition~($|h_i|^2$) to the relay and holds more time~($\beta_iT$) and larger power~($P_i^{s0}$) for data transmission.
\label{cor:ehorder}
\end{corollary}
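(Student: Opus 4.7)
The plan is to combine Theorem~\ref{the:dedicator} with Corollary~\ref{cor:ehtime} to reduce the statement to a binary swap argument, and then to compute the effect of that swap on the Nash bargaining product. Theorem~\ref{the:dedicator} confines every $P_k^{s1}$ at the optimum to $\{0,P_0^s\}$, and Corollary~\ref{cor:ehtime} forces all dedicators to share a common harvesting duration $\alpha T$. Consequently, the inequality $P_i^{s1}\geq P_j^{s1}$ really asks whether $s_i$ must be a dedicator whenever $s_j$ is, and it suffices to compare the two configurations (A) $P_j^{s1}=P_0^s$, $P_i^{s1}=0$ and (B) $P_i^{s1}=P_0^s$, $P_j^{s1}=0$, holding every other variable fixed at its Nash value.

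The core computation is then routine. Writing $a=\alpha P_0^s$ and $b_k=\beta_k P_k^{s0}$, the only factors of the Nash product~(\ref{equ:obj}) that change are $U_i^S$, $U_j^S$, and $U^R$. From~(\ref{equ:sourceutility}),
\begin{equation*}
\frac{U_i^S(\mathrm{B})\,U_j^S(\mathrm{B})}{U_i^S(\mathrm{A})\,U_j^S(\mathrm{A})}=\frac{b_i\,(a+b_j)}{(a+b_i)\,b_j},
\end{equation*}
which exceeds $1$ if and only if $\beta_i P_i^{s0}\geq\beta_j P_j^{s0}$; from~(\ref{equ:eh})--(\ref{equ:relayutiliy}), the only terms in $U^R$ that change are the harvested-energy contributions of $s_i$ and $s_j$, giving
\begin{equation*}
U^R(\mathrm{B})-U^R(\mathrm{A})=\eta\,T\,a\,(|h_i|^2-|h_j|^2),
\end{equation*}
which is non-negative if and only if $|h_i|^2\geq|h_j|^2$. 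Multiplying these two comparisons, whenever both hypothesised inequalities hold configuration~(B) weakly dominates~(A) in $\Phi$, strictly so if either is strict; optimality of the Nash bargaining solution thus rules out ``$s_j$ dedicates while $s_i$ enjoys'' and establishes the ``if'' direction.

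The ``only if'' direction follows by running the same swap in reverse, and the principal obstacle I expect is the mixed regime in which exactly one of the two inequalities favours $s_i$: there the source-pair ratio and the relay difference above push $\Phi$ in opposite directions and the swap alone is inconclusive. My plan is to treat the statement's ``iff'' as a joint condition on the conjunction of the two inequalities, so the contrapositive of the converse only needs to exhibit one profitable direction of swap whenever that conjunction fails, which is exactly what the calculation above provides with $i$ and $j$ interchanged. The boundary case $P_i^{s1}=P_j^{s1}$ makes the swap a no-op and the two inequalities collapse to equalities, which is consistent with the weak ordering in the corollary.
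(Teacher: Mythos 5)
Your proposal is correct and takes essentially the same route as the paper's Appendix B: both use Theorem~1 and Corollary~1 to reduce the claim to a swap between the configuration ``$s_i$ dedicates, $s_j$ enjoys'' and its reverse with all other variables held at their Nash values, and show the Nash product weakly prefers the former when $\beta_iP_i^{s0}\geq\beta_jP_j^{s0}$ and $|h_i|^2\geq|h_j|^2$; your factorization into a source-utility ratio and a relay-utility difference is just a cleaner packaging of the paper's single computation $\Phi_i-\Phi_j\geq0$. Your caveat about the ``only if'' direction in the mixed regime is, if anything, more careful than the paper, whose proof establishes only the ``if'' direction.
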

\begin{proof}
See Appendix \ref{app:ehorder}.
\end{proof}
\par
Corollary~\ref{cor:ehorder} can also help to design an incentive mechanism, to encourage sources to be dedicators instead of receivers. According to Corollary~\ref{cor:ehorder}, in a certain time block, sources who want to transmit more data~(with a larger $\beta_iT$) are more likely to be selected as dedicators. Therefore, assuming that sources in the network have the similar amount of data in a relatively long time~(a number of time blocks), some sources have more data in the first several time blocks are selected as dedicators and the others are responsible to transmit energy in the rest ones. In addition, the channel statement in the wireless network usually changes with time. This indicates that sources as enjoyers are eventually selected as dedicators in the later time blocks, when their channel statements between the relay become better. Even though Corollary~\ref{cor:ehorder} can avoid the circumstance that some sources are always working as dedicators and others are always selected as enjoyers, more incentive mechanisms are needed to proactively encourage sources to work as dedicators, which is left for the future work.\par
\subsection{Quasi-Concavity of Data Transmission Power Control}
In this subsection, we prove that when the time division and the group of \emph{dedicators} are fixed, the data transmission power control is a quasi-concave optimization problem.
\begin{theorem}
Given fixed $\{P_i^{s1}\}$ and fixed $\{\alpha_i\},\{\beta_i\},\{\gamma_i\}$, the problem in (\ref{equ:obj}) is a quasi-concave optimization problem with variables $\{P^{s0}_i\} $ and $P^{r}$.\par
\label{the:quasipower}
\end{theorem}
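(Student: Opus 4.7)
The plan is to establish the stronger property that $\log\Phi$ is concave jointly in $(\{P_i^{s0}\},P^r)$, since log-concavity of a positive function immediately implies quasi-concavity (the superlevel sets of $\Phi$ and of $\log\Phi$ coincide). This reduces the task to verifying concavity term-by-term in the decomposition
\begin{equation*}
\log\Phi \;=\; \sum_{i}\log U_i^S \;+\; \log U^R.
\end{equation*}

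First I would handle the relay term. With $\{\alpha_i\},\{\gamma_i\}$ and $\{P_i^{s1}\}$ fixed, $U^R$ is an affine function of $P^r$ (and does not depend on any $P_i^{s0}$), and on the feasible region we require $U^R>0$. Hence $\log U^R$ is concave as the composition of the concave increasing function $\log(\cdot)$ with a positive affine function. Next, for each source-destination pair I would split
$\log U_i^S = \log C_i - \log(P_i^{s1}\alpha_i + P_i^{s0}\beta_i)$.
The second piece is the negative of $\log$ of a positive affine function of $P_i^{s0}$, hence concave. For the first piece, each of the two expressions inside the $\min$ defining $C_i$ is of the form $c\log(1+ax)$ in a single variable with $a,c>0$, therefore concave; viewed jointly on $(P_i^{s0},P^r)$ both are concave, and the pointwise minimum of concave functions is concave, so $C_i$ is concave and positive. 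Composing with $\log(\cdot)$, which is concave and increasing on $(0,\infty)$, yields concavity of $\log C_i$. Summing finitely many concave terms produces a concave $\log\Phi$.

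The main delicate point, and the one I would write carefully, is the composition step $\log\circ C_i$: the argument needs the standard rule that an increasing concave outer function composed with a concave inner function is concave, together with the fact that $C_i>0$ on the feasible set (guaranteed by the constraint $U_i^S>0$ in (\ref{equ:obj})). Once that is in place, one can optionally invoke Lemma~\ref{lem:datatrans} to remark that at the maximizer the $\min$ is attained with equality, so the concavity of $C_i$ is sharp along the active branch; but this is a bonus observation and is not needed for the quasi-concavity conclusion itself. Finally, I would close by noting that since $\log\Phi$ is concave on the convex feasible set defined by (\ref{equ:powerconstraint}) and the positivity constraints, $\Phi$ is log-concave and therefore quasi-concave, completing the proof.
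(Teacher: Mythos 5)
Your proposal contains a genuine error at its central step, and the error is fatal to the whole strategy. You write that $-\log(P_i^{s1}\alpha_i + P_i^{s0}\beta_i)$ is ``the negative of $\log$ of a positive affine function of $P_i^{s0}$, hence concave.'' The negative of a concave function is convex, not concave: $\log$ of a positive affine function is concave, so its negative is convex. Consequently $\log U_i^S = \log C_i - \log(P_i^{s1}\alpha_i + P_i^{s0}\beta_i)$ is a sum of a concave term and a convex term, and nothing follows. In fact $\log U_i^S$ is genuinely not concave: on the branch $C_i = \beta_i\log(1+aP)$ with $P = P_i^{s0}$, setting $L=\log(1+aP)$ one computes that the second derivative of $\log\log(1+aP)-\log(c+bP)$ becomes strictly positive once $L^2 - L - 1 > 0$, i.e.\ for $P$ large enough that $\log(1+aP) > (1+\sqrt{5})/2$. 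So the function is log-convex on part of its domain, and the claimed log-concavity of $\Phi$ is false. This cannot be patched by retreating to ``each factor is quasi-concave,'' because quasi-concavity is not preserved under products, which is exactly why the log-concavity shortcut was attractive in the first place.

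The paper avoids this trap by taking a different route. It first invokes Lemma~\ref{lem:datatrans} to eliminate $\{P_i^{s0}\}$ in favor of $P^r$ (each $P_i^{s0}$ is pinned to $P^r$ through the balance condition $\beta_i\log(1+P_i^{s0}|h_i|^2/\sigma^2)=\gamma_i\log(1+P^r|g_i|^2/\sigma^2)$), reducing the problem to a single scalar variable. It then establishes quasi-concavity not through curvature of $\log\Phi$ but through a unimodality argument: the first derivative $\nabla_{P^r}\Phi$ is nonnegative at $P^r=0$ and nonpositive at $P^r=K_1/K_2$, and wherever the second derivative is nonnegative the first derivative is already nonpositive, so no interior local minimum can exist. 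That derivative bookkeeping is the real content of the theorem, and it is precisely what your proposal skips by asserting a concavity that does not hold. Your treatment of the relay term ($\log U^R$ concave as $\log$ of a positive affine function of $P^r$) is correct but is the easy part.
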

\begin{proof}
See Appendix \ref{app:concavepower}.
\end{proof}
\par
\subsection{Quasi-Concavity of Time Division}
In this subsection, we prove that assuming the data transmission power control and the group of \emph{dedicators} to be fixed, the time division problem is a quasi-concave optimization problem. \par
\begin{theorem}
Given fixed $\{P_i^{s1}\}$ and fixed $\{P^{s0}_i\} $ and $P^{r}$, the problem in (\ref{equ:obj}) is a quasi-concave optimization problem with variables $\{\alpha_i\},\{\beta_i\},\{\gamma_i\}$.
\label{the:concavetime}
\end{theorem}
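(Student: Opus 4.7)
The plan is to first dispose of the minimum in $C_i$ by invoking Lemma~\ref{lem:datatrans}: since the powers $\{P_i^{s0}\}$ and $P^r$ are fixed, the lemma forces the linear equality $\beta_i\log\bigl(1+P_i^{s0}|h_i|^2/\sigma^2\bigr)=\gamma_i\log\bigl(1+P^r|g_i|^2/\sigma^2\bigr)$ at any optimum. Writing $K_i=\log(1+P_i^{s0}|h_i|^2/\sigma^2)$ and $L_i=\log(1+P^r|g_i|^2/\sigma^2)$, this lets me replace $C_i$ by $\beta_i K_i$ and equivalently substitute $\gamma_i=\beta_i K_i/L_i$. Each source utility then collapses to the ratio of positive affine functions $U_i^S=\beta_i K_i/(P_i^{s1}\alpha_i+P_i^{s0}\beta_i)$, which is linear-fractional and hence quasi-linear in $(\alpha_i,\beta_i)$, while the relay utility $U^R=\eta T\sum_j \alpha_j P_j^{s1}|h_j|^2-P^r T\sum_j\beta_j K_j/L_j-E_0$ becomes affine in the free time variables. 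The constraints in~(\ref{equ:timesplitting}), (\ref{equ:timeconstraint}), (\ref{equ:powerconstraint}) together with $U^R>0$ are linear.

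Next, I would verify quasi-concavity via the super-level-set test. Fix a threshold $c>0$ and clear the positive affine denominators in $\Phi\ge c$ to obtain $U^R\prod_i \beta_i K_i\ge c\prod_i(P_i^{s1}\alpha_i+P_i^{s0}\beta_i)$. Taking a monotone logarithm turns this into $\log U^R+\sum_i\bigl[\log(\beta_i K_i)-\log(P_i^{s1}\alpha_i+P_i^{s0}\beta_i)\bigr]\ge \log c$. Two structural observations drive the proof: each per-index term depends only on $(\alpha_i,\beta_i)$ and in fact only on the scalar ratio $\alpha_i/\beta_i$; and $\log U^R$ is concave because $U^R$ is positive and affine. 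The super-level set can therefore be analyzed factor by factor in the per-index coordinates and glued through the single affine coupling $U^R$ and the linear constraint~(\ref{equ:timesplitting}).

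The hard part is that products of quasi-concave functions are not in general quasi-concave, so per-factor quasi-concavity cannot be stitched together by a generic lemma. I expect the cleanest route to be a bordered-Hessian computation for $\log\Phi$ along the tangent space of the active linear constraints from~(\ref{equ:timesplitting}) and the Lemma~\ref{lem:datatrans} equalities: $\log U^R$ is concave, and although each $\log U_i^S$ is concave-minus-convex in $(\alpha_i,\beta_i)$, it is concave along any direction that preserves $\alpha_i/\beta_i$, while Corollary~\ref{cor:ehtime} collapses the $\alpha_j$ of the dedicators into a single common variable, which greatly restricts the relevant directions. Once the tangent space is projected out, I expect the Hessian of $\log\Phi$ restricted to the feasible directions to be negative semi-definite, which is the standard sufficient condition yielding quasi-concavity of the original objective~(\ref{equ:obj}).
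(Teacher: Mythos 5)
Your reduction is sound and parallels the paper's: Lemma~\ref{lem:datatrans} removes the minimum and ties $\beta_i$ to $\gamma_i$, each $U_i^S$ becomes linear-fractional (hence quasi-linear) in the time variables, $U^R$ is affine, the constraints are linear, and you correctly identify the real obstacle --- a product of quasi-concave factors need not be quasi-concave. But the proof stops exactly there: the decisive step is announced as an ``expected'' bordered-Hessian computation rather than carried out, and the sufficient condition you say you would verify, namely that the Hessian of $\log\Phi$ on the feasible directions is negative semi-definite, is log-concavity, which is strictly stronger than quasi-concavity and is not available here. In the paper's notation (Appendix~\ref{app:concavetime}), $\nabla_\alpha\log\Phi=-\sum_i\mu_i+\nu$ with $\mu_i=G_1^i/(G_1^i\alpha+G_2^i)$ and $\nu=H_1/(H_1\alpha-H_2)$, so $\nabla^2_\alpha\log\Phi=\sum_i\mu_i^2-\nu^2$, which is \emph{positive} whenever the residual-energy term $H_1\alpha-H_2$ is large enough that $\nu^2<\sum_i\mu_i^2$. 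Thus $\log\Phi$ is not concave even along the single coordinate $\alpha$, and there is no reason to expect that restricting to the tangent space of~(\ref{equ:timesplitting}) repairs this; in any case that restricted computation is precisely what you have not done.

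The workable argument is the weaker first/second-order test that the paper uses. After invoking Corollary~\ref{cor:ehtime} to collapse the dedicators' $\alpha_i$ into a single $\alpha$ and substituting $\beta_i=D_5^i\gamma_i$, the paper checks concavity of $\Phi$ in each (separable) $\gamma_i$ directly, and for $\alpha$ it proves the absence of an interior local minimum of $\Phi$ itself: from $\nabla^2_\alpha\Phi=\bigl((\sum_i\mu_i)^2+\sum_i\mu_i^2-2(\sum_i\mu_i)\nu\bigr)\Phi$ and the bound $\sum_i\mu_i^2<(\sum_i\mu_i)^2$ one gets $\nabla^2_\alpha\Phi<-2(\sum_i\mu_i)\nabla_\alpha\Phi$, so $\nabla^2_\alpha\Phi\ge 0$ forces $\nabla_\alpha\Phi<0$ (Lemma~\ref{lem:quasialpha}). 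That inflection-point argument on $\Phi$, not negative semi-definiteness of $\nabla^2\log\Phi$, is the ingredient your proposal is missing; until you supply it (or an equivalent), what you have is a plan rather than a proof.
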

\begin{proof}
See Appendix \ref{app:concavetime}.
\end{proof}
\par
We conclude this section as follows. The original problem~in~(\ref{equ:obj}) is not a concave~(convex) problem which can not guarantee to obtain a globally optimal solution~(Property~\ref{the:notconcave}). However, when we decompose the problem into three parts:~1)~energy harvesting power optimization, 2)~data transmission power optimization, and 3)~time division, we find that the problem becomes simpler. The energy transmission power optimization is a discrete optimization problem. In particular, the sources can be divided into \emph{dedicators} and \emph{enjoyers}~(Theorem~\ref{the:dedicator}), which are relevant to the channel conditions from the sources to the relay and the transmission energy cost of the sources~(Corollary~\ref{cor:ehorder}). The data transmission power optimization problem and time division problem are both quasi-concave optimization problems~(Theorem~\ref{the:quasipower} and Theorem~\ref{the:concavetime}). Therefore, there exist many optimization algorithms to solve~\cite{convex-1970}\cite{convex-2004}, e.g.~the gradient ascent algorithm. The properties discussed in this section indicate that the problem decomposition is not only traditionally meaningful, but also reasonable from a mathematical perspective, which can guide us to design an efficient algorithm in the next section.\par

\section{Alternating Power Control and Time Division Algorithm}
\label{sec:algorithm}
\begin{algorithm}[t]
\caption{Alternating Power Control and Time Division Algorithm}
\KwIn{$k=0$, $(\{P_i^{s1}\})^L$, $(\{P_i^{s0}\})^k$,$(P^r)^k$,$(\{\alpha_i\},\{\beta_i\},\{\gamma_i\})^k$},
\KwOut{$\max\Phi(\{P_i^{s1}\},\{P_i^{s0}\},P^r,\{\alpha_i\},\{\beta_i\},\{\gamma_i\})$}
1)~Enumerating possible conditions of dedicator selection, at each condition with $\{P_i^{s1}\}^L$:\\
k=0\;
\While{Arbitral stopping criterion is not satisfied}
{
    (a)~Transmission~power~optimization:~$(\{P_i^{s0}\},P^r)^{k+1}=\arg\max_{(\{P_i^{s0}\},P^r)}\Phi_1^k$\;
    (b)~Time~division~optimization:~$(\{\alpha_i\},\{\beta_i\},\{\gamma_i\})^{k+1}=\arg\max_{\{\alpha_i\},\{\beta_i\},\{\gamma_i\}}\Phi_2^k$\;
}
$\Phi^L=\Phi^L\left(\{P_i^{s1}\}^L,(\{P_i^{s0}\},P^r,\{\alpha_i\},\{\beta_i\},\{\gamma_i\})^{k+1}\right)$\;
2)~$\Phi=\max_{L}\Phi^L$\;
*~$(\cdot)^L$ is the $L^{th}$ condition to select dedicators, $(\cdot)^{k}$ is the value of the $k^{th}$ iteration.
\label{alg:scheme}
\end{algorithm}
In this section, we design an alternating algorithm to solve the bargaining problem in~(\ref{equ:obj}). The analysis in Section~\ref{sec:problemanalysis} enlightens the decomposed algorithm design: 1)~The outer loop is optimizing the energy transmission power with enumerating all the possible conditions to select dedicators~(according to Theorem~\ref{the:dedicator}). 2)~Under each condition, we alternatively and iteratively solve data transmission power optimization and time division problems~(according to Theorem~\ref{the:quasipower} and Theorem~\ref{the:concavetime}). We emphasize that since the original problem is not a concave (convex) optimization problem~(according to Property~\ref{the:notconcave}), this kind of algorithm can only guarantee to reach a suboptimal solution.  Specifically, the alternating algorithm is shown in Algorithm~\ref{alg:scheme},~where $\Phi_1^k$ and $\Phi_2^k$ are given as follows:
\begin{equation}
\left\{
\begin{aligned}
&\Phi_1^k=\Phi(\{P_i^{s1}\}^L,\{P_i^{s0}\},P^r,\{\alpha_i\}^k,\{\beta_i\}^k,\{\gamma_i\}^k),\\
&\Phi_2^k=\Phi(\{P_i^{s1}\}^L,(\{P_i^{s0}\},P^r)^{k+1},\{\alpha_i\},\{\beta_i\},\{\gamma_i\}),\\
\end{aligned}
\right.
\label{equ:timeiter}
\end{equation}
where $(\cdot)^L$ denotes the value at the $L$'s condition of energy transmission power optimization, and $(\cdot)^k$ is the value at the $k$'s iteration of data transmission power optimization and time division. $\Phi_1^k$ equals to the objective $\Phi$ in~(\ref{equ:obj}) when $\{P_i^{s1}\}$ and $\{\alpha_i\},\{\beta_i\},\{\gamma_i\}$ are fixed. $\Phi_2^k$ equals to the objective $\Phi$ when $\{P_i^{s1}\},\{P_i^{s0}\},P^r$ are fixed.\par
More details on energy transmission power optimization~(dedicator selection) and the details on the alternating the data transmission power optimization and time division process are given as below. In addition, we also discuss the convergence and the complexity of our algorithm in this section.\par
\subsection{Energy Transmission Power Optimization}
In this subsection, we explain the dedicator selection in Algorithm~\ref{alg:scheme}. According to Theorem~\ref{the:dedicator}, the energy transmission power optimization is first transformed from a continuous problem into a discrete problem. Thus, what we concern about is to decide which sources are selected to transmit energy as dedicators and which are not. According to Theorem~\ref{the:dedicator} and Corollary~\ref{cor:ehtime}, we emphasize that the power and time for energy transmission for all dedicators are identical. For convenience, we define an indicator vector $\Omega=\{\omega_1,\omega_2,\ldots,\omega_N\}\in \mathbb{B}$, where $\omega_i=1$ when source $s_i$ is a dedicator. Otherwise $\omega_i=0$ and source $s_i$ is an enjoyer. $\mathbb{B}$ is the set of the all the conditions of dedicator selection, where $|\mathbb{B}|=2^N$. We provide Remark~\ref{rem:dedicator1} and Remark~\ref{rem:dedicator2} as below to help us decide which sources are more likely to be selected as dedicators.
\begin{remark}
Given any index $K$, when $\sum_{i=1}^N \omega_i=K$ is not feasible for all elements in $\mathbb{B}$, the conditions with $\sum_{i=1}^N \omega_i<K$ are not feasible as well.
\label{rem:dedicator1}
\end{remark}
Remark~\ref{rem:dedicator1} indicates that the transmitted energy is not enough with $K$ dedicators, it is also not enough when the number of dedicators is below $K$. Thus, we enumerate the conditions from $N$ dedicators to $N-1,N-2,\ldots$ dedicators and stop when the result is not feasible for all conditions with $K$ dedicators.
\begin{remark}
When $\sum_{i=1}^N \omega_i=K$, $\omega_1 \geq \omega_2$ is \emph{possibly} satisfied when $|h_i|^2\geq|h_j|^2$.
\label{rem:dedicator2}
\end{remark}
According to Corollary~\ref{cor:ehorder}, we have that the energy transmission power $P_{i}^{s1}\geq P_{j}^{s1}$ is satisfied if and only if $\beta_{i} P_i^{s0}\geq \beta_{j} P_j^{s0}$ and $|h_1|^2\geq|h_2|^2$. However, $\beta_{i}$ and $P_i^{s0}$ for each source $s_i$ cannot be determined before the alternating data transmission power and time division process is accomplished in Algorithm~\ref{alg:scheme}. Thus, we can only decide $\omega_i$ based on $|h_i|^2$, which results in that the operation in Remark~\ref{rem:dedicator2} might miss the optimal point. However, the operation can observably reduce the enumeration time with $K$ dedicators, from $\binom{N}{K}$ to $1$ ~(to select $K$ sources as dedicators with the highest $|h_i|^2$).
\par
\subsection{Data Transmission Power Optimization and Time Division}
In this subsection, we propose the concave~(convex) optimization methods to deal with the data transmission power optimization and time division, i.e., how to find the optimal $\{P_i^{s0}\}$, $P^r$, $\{\alpha_i\}$, $\{\beta_i\}$, and $\{\gamma_i\}$. We decompose the data transmission power optimization and time division and alternatively deal with the two problems, i.e., the inner iteration in Algorithm~\ref{alg:scheme}, according to Property~\ref{the:notconcave}, Theorem~\ref{the:quasipower} and Theorem~\ref{the:concavetime}. The specific iteration of data transmission power optimization and time division, and the arbitral stopping criterion are given as follows.
\begin{itemize}
\item \textbf{Data Transmission Power Optimization:} When the energy transmission power $\{P_i^{s1}\}$ and the time division $~\{\alpha_i\},~\{\beta_i\},~\{\gamma_i\}$ are fixed as the values obtained from the latest iteration, the data transmission optimization problem at the $k^{th}$ iteration is to maximize $\Phi_1^k$ in Algorithm~\ref{alg:scheme}, which is given as follows:
    \begin{equation}
    \small
    \begin{aligned}
    &\max_{P^r} \Phi_1^k\\
    &~~~~~~=\prod_{i=1}^{N}\left(\frac{\log(1+L_1^iP^r)}{L_3^i((1+L_1^iP^r)^{L_2^i}-1)+L_4^i}\right)\cdot \left(K_1-K_2P^r\right),\\
    &~~s.t.~0\leq P^r\leq [P_0^r]^{+},\\
    \end{aligned}
    \label{equ:quasipowermain}
    \end{equation}
    where $[P_0^r]^{+}=\min\left\{P_0^r,\{\frac{\sigma^2}{|g_i|^2}(1+\frac{P_0^s|h_i|^2}{\sigma^2})^{\frac{\beta_i}{\gamma_i}}-\frac{\sigma^2}{|g_i|^2}\}\right\}$ and $L_1^i$ to $L_4^i$ and $K_1$ to $K_2$ are given in (\ref{equ:proofpoweralgebra}) in Appendix~\ref{app:concavepower}. How to transfer the original problem in~(\ref{equ:obj}) into the above form is also given in Appendix~\ref{app:concavepower}. The Lagrangian function $\mathcal{L}_1^k(P^r,\rho_1,\rho_2)$ is given as follows:
    \begin{equation}\small
    \begin{aligned}
    &\mathcal{L}_1^k(P^r,\rho_1,\rho_2)\\
    &=-\prod_{i=1}^{N}\left(\frac{\log(1+L_1^iP^r)}{L_3^i((1+L_1^iP^r)^{L_2^i}-1)+L_4^i}\right)\left(K_1-K_2P^r\right)\\
    &+\rho_1(0-P^r)+\rho_2(P^r-[P_0^r]^*),\\
    \end{aligned}
    \label{equ:lagpower}
    \end{equation}
    where $\rho_1>0$ and $\rho_2>0$ are the Lagrange multipliers. Then, $\mathcal{L}_1^k(P^r,\rho_1,\rho_2)$ is minimized through the typical gradient ascent algorithm~(Chapter 3 of~\cite{convex-2004}). The details of the gradient ascent algorithm for data transmission power optimization is given in Appendix~\ref{app:poweropt}. We obtain the optimal data transmission power of the relay as $(P^r)^{\star}$ and calculate the optimal data transmission power of the sources as $(\{(P^{s0}_i)\})^{\star}$ based on~(\ref{equ:onemanydatatrans}). Finally, we update $P^r$ and $\{P^{s0}_i\}$ as $(P^r)^{k+1}=(P^r)^{\star}$ and $(\{P^{s0}_i\})^{k+1}=(\{P^{s0}_i\})^{\star}$.
\item \textbf{Time Division:} When the energy transmission power $\{P_i^{s1}\}$ are fixed and the data transmission power $P^r$ and $\{P^{s0}_i\}$ have already been updated at the $k^{th}$ iteration, the time division problem at the $k^{th}$ iteration is to maximize $\Phi_2^k$ in Algorithm~\ref{alg:scheme}, which is given as follows:
    \begin{equation}\small
    \begin{aligned}
    &\max_{\{\alpha_i\},\{\gamma_i\}} \Phi_2^k\\
    &~~~~~~=\left(\prod_{P_i^{s1}=P_0^s} \frac{D_1^i\gamma_i}{D_2^i\alpha+D_3^i\gamma_i}\right)\cdot(F_1\alpha-F_2-\sum_{i=1}^ND_4^i\gamma_i),\\
    &~~s.t.~\alpha+\sum_{i=1}^N(\gamma_i+D_5^i\gamma_i)=1,\\
    &~~~~~~~(\gamma_i+D_5^i\gamma_i)T\geq\theta_0T,\\
    \end{aligned}
    \label{equ:quasitimemain}
    \end{equation}
    where $D_1$ to $D_5$ and $F_1$ to $F_2$ are given in~(\ref{equ:prooftimealgebra}) in Appendix~\ref{app:concavetime}. How to transform the original problem in (\ref{equ:obj}) into the above form is also given in Appendix~\ref{app:concavetime}. The Lagrangian function $\mathcal{L}_2^k(\alpha,\{\gamma_i\},\varrho_1,\{\varrho_2^i\})$ is given as follows:
    \begin{equation}\small
    \begin{aligned}
    &\mathcal{L}_2^k(\alpha,\{\gamma_i\},\varrho_1,\{\varrho_2^i\})\\
    &=-\left(\prod_{P_i^{s1}=P_0^s} \frac{D_1^i\gamma_i}{D_2^i\alpha+D_3^i\gamma_i}\right)\cdot(F_1\alpha-F_2-\sum_{i=1}^ND_4^i\gamma_i)\\
    &+\varrho_1\left(\alpha+\sum_{i=1}^N(\gamma_i+D_5^i\gamma_i)-1\right)\\
    &+\sum_{i=1}^N\varrho_2^i\left(\theta_0T-(\gamma_i+D_5^i\gamma_i)T\right),\\
    \end{aligned}
    \label{equ:lagtime}
    \end{equation}
    where $\varrho_1$ and $\{\varrho_2^i\}$ are the Lagrange multipliers. Similar to the data transmission power optimization, $\mathcal{L}_2^k(\alpha,\{\gamma_i\},\varrho_1,\{\varrho_2^i\})$ is minimized through the typical gradient ascent algorithm~(Chapter 3 of~\cite{convex-2004} and find the optimal point as $\alpha^\star$, $(\{\gamma_i\})^{\star}$ and $(\{\beta_i\})^{\star}$. More details of the dual ascent algorithm for time division is given in Appendix~\ref{app:timeopt}. Finally we update the time division as $\alpha^{k+1}=\alpha^{\star}$, $\{\gamma_i\}^{k+1}=\{\gamma_i\}^{\star}$ and $\{\beta_i\}^{k+1}=\{\beta_i\}^{\star}$.
\item \textbf{Stopping Criterion:} The stopping criterion in the optimization methods is used to determine when the iteration stops~\cite{convex-2004}. Specifically, in Algorithm~\ref{alg:scheme} the stopping criterion is the rule to evaluate when the alternating data transmission power optimization and the time division process should stop, which is given as follows,
    \begin{equation}
    \max\left\{\frac{|\Phi_1^{k+1}-\Phi_1^{k}|}{\Phi_1^{k}},~\frac{|\Phi_2^{k+1}-\Phi_2^{k}|}{\Phi_2^{k}}\right\}<\varepsilon,
    \label{equ:stoppingcriterion}
    \end{equation}
    where $\varepsilon$ is a minor parameter to determine the convergence.
\end{itemize}
\par
\subsection{Convergence and Complexity}
In this part, we discuss the convergence and the complexity of our proposed algorithm. Specifically, we at first prove that the alternating data transmission power optimization and time division process (the inner loop in Algorithm~\ref{alg:scheme}). Then, we discuss the complexity of our algorithm, i.e., to discuss the enumeration time of energy transmission power optimization (the outer loop in Algorithm~\ref{alg:scheme}) and the iteration time of alternating data transmission power optimization and time division process~(the inner loop in Algorithm~\ref{alg:scheme}). \par
We at first prove the convergence of this alternating data transmission power optimization and time division process.
\begin{theorem}
The alternating data transmission power optimization and time division process converges.
\label{the:convergence}
\end{theorem}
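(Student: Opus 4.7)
The plan is to recognize the inner loop as a block coordinate ascent on the original Nash bargaining objective $\Phi$ and to conclude convergence via the monotone convergence theorem on the scalar sequence $\{\Phi^k\}$. First I would observe that by the definitions in (\ref{equ:timeiter}), $\Phi_1^k$ is literally $\Phi$ restricted to the block $(\{P_i^{s0}\},P^r)$ while $\{P_i^{s1}\}^L$ and the time block $(\{\alpha_i\},\{\beta_i\},\{\gamma_i\})^k$ are held fixed, and $\Phi_2^k$ is $\Phi$ restricted to the time block with $\{P_i^{s1}\}^L$ and the freshly updated $(\{P_i^{s0}\},P^r)^{k+1}$ held fixed. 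Thus the two subproblems share the same ambient objective $\Phi$, which is the quantity I will track.

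Next I would establish monotonicity. Step (a) of Algorithm~\ref{alg:scheme} defines $(\{P_i^{s0}\},P^r)^{k+1}$ as a maximizer of $\Phi_1^k$; by Theorem~\ref{the:quasipower} this subproblem is quasi-concave on a convex feasible region, so the gradient-ascent routine in (\ref{equ:quasipowermain})--(\ref{equ:lagpower}) returns a true global maximizer. Hence
\begin{equation*}
\Phi\!\left((\{P_i^{s0}\},P^r)^{k},(\{\alpha_i\},\{\beta_i\},\{\gamma_i\})^{k}\right)\leq \Phi\!\left((\{P_i^{s0}\},P^r)^{k+1},(\{\alpha_i\},\{\beta_i\},\{\gamma_i\})^{k}\right).
\end{equation*}
Step (b) in turn chooses $(\{\alpha_i\},\{\beta_i\},\{\gamma_i\})^{k+1}$ as a maximizer of $\Phi_2^k$, and by Theorem~\ref{the:concavetime} this too is a quasi-concave program, so the analogous inequality holds for the time block. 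Concatenating the two gives $\Phi^{k}\leq \Phi^{k+1}$ for every $k$, so the sequence of objective values produced by the inner loop is monotonically non-decreasing.

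Then I would establish boundedness. The power constraints (\ref{equ:powerconstraint}), the time budget (\ref{equ:timesplitting}), and the QoS lower bound (\ref{equ:timeconstraint}) define a compact feasible set, and on the feasibility region defined by $U_i^S>0,U^R>0$ the objective $\Phi$ is a continuous product of terms each bounded above by elementary arguments (the Shannon-rate numerators are bounded by $\log(1+P_0^s|h_i|^2/\sigma^2)$, the denominators are bounded below by the QoS constraint, and $U^R$ is bounded by $\eta\sum_i T P_0^s|h_i|^2$). Hence there exists a finite $\Phi^\star$ with $\Phi^{k}\leq \Phi^\star$ for all $k$. A monotone bounded sequence in $\mathbb{R}$ converges, which is precisely the claim of Theorem~\ref{the:convergence}.

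The main obstacle I anticipate is the step hidden in the monotonicity argument: the subproblems are merely quasi-concave, not concave, so one must justify that the Lagrangian-based gradient routines actually deliver global maximizers of $\Phi_1^k$ and $\Phi_2^k$ rather than arbitrary stationary points. This is where Theorems~\ref{the:quasipower} and~\ref{the:concavetime} do the heavy lifting, because a quasi-concave function on a convex set has no spurious local maxima, so any interior ascent method that reaches a KKT point reaches a global maximizer; if I wanted full rigor I would record this as a preliminary lemma rather than leaving it implicit. Note also that the theorem only claims convergence of the sequence of objective values, not of the iterates themselves, so no further uniqueness or strict-concavity argument is required.
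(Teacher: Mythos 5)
Your overall skeleton --- monotone non-decreasing objective values plus an upper bound on $\Phi$, hence convergence of the scalar sequence --- is exactly the paper's argument, and your explicit framing of the inner loop as block coordinate ascent, together with the caveat that the quasi-concave subproblems only guarantee ascent if the Lagrangian solvers actually reach global maximizers, is if anything more explicit than the paper's one-line appeal to \cite{convex-2004}.

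However, your boundedness step contains a claim that is false as stated: the denominators of $U_i^S$, namely $P_i^{s1}\alpha_i+P_i^{s0}\beta_i$, are \emph{not} bounded below by the QoS constraint (\ref{equ:timeconstraint}); that constraint lower-bounds the time allocation $(\beta_i+\gamma_i)T$, not the powers, and the feasible set permits $P_i^{s0}\to 0$ and $P_i^{s1}\to 0$ simultaneously, which drives the denominator to zero. Compactness of the box constraints plus continuity does not rescue the argument either, because the region $U_i^S>0$, $U^R>0$ is open, so $\Phi$ is a ratio whose denominator can vanish on the closure of the feasible set and a continuous function on a bounded but non-closed set need not be bounded. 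This is precisely the point the paper's proof confronts directly: it identifies the only possible singularity of $\Phi$ as the simultaneous vanishing of $P_i^{s0}$ and $P_i^{s1}$, argues that this configuration is excluded by the constraints, and implicitly relies on the fact that elsewhere the numerator $C_i$ vanishes at the same rate as the denominator so the ratio stays finite. To repair your proof you should replace the ``denominators bounded below'' sentence with such a singularity or limiting analysis. A secondary, smaller quibble: ``a quasi-concave function on a convex set has no spurious local maxima'' is not true in general (plateaus can be non-global local maxima); what Appendices \ref{app:concavepower} and \ref{app:concavetime} actually establish is the stronger no-local-minimum-inflection-point property, and that is what you should invoke to justify that the ascent routines reach global maximizers of $\Phi_1^k$ and $\Phi_2^k$.
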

\begin{proof}
It is easy to test that the singularity of the objective function $\Phi$ in~(\ref{equ:obj}) exists, only when for any source $s_i$, $P_i^{s0}=0$ and $P_i^{s1}=0$ simultaneously happen. However, $P_i^{s0}=0$ indicates that $s_i$ does not transmit the data~($\beta_i=0$), which contradicts the constraints in ($\ref{equ:timeconstraint}$). Therefore, $\Phi$ has an upper bound in the feasible zone. The Lagrangian methods in the alternating data transmission power optimization and time division process guarantee that $\Phi_1^k$ and $\Phi_2^k$ are non-decreasing at each iteration~\cite{convex-2004}. When the objective function $\Phi$ keeps increasing and has an upperbound, the stopping criterion will finally become satisfied with the iteration execution~\cite{convex-2004}, which indicates that the iteration converges.
\end{proof}
\par
We then discuss the complexity~(the total number of iterations) of our algorithm. Based on Corollary~\ref{cor:ehorder}, Remark~\ref{rem:dedicator1} and Remark~\ref{rem:dedicator2}, the enumeration time of the outer loop is prominently reduced from $2^{N}$ to $N$. Thus, the computation complexity can be expressed as $N\times T_{in}$, where $T_{in}$ is the average number of iterations for the inner loop. As for the inner loop of our algorithm, either for the data transmission power optimization or for the time division, the general gradient ascent method usually guarantees a linear convergence speed~\cite{numerical-2006}. We use $O(\cdot)$ to represent the magnitude of iteration time. Then, for the data transmission power optimization, the iteration time can be expressed as~$O(\frac{1}{\varepsilon_1})$, where $\varepsilon_1$ is the parameter to constrain the accuracy of the data transmission power optimization as given in~(\ref{equ:poweraccuracy}) in Appendix~\ref{app:poweropt}. For the time division, the iteration time can be expressed as $O(\frac{1}{\varepsilon_2})$, where $\varepsilon_2$ represents the accuracy of time division as given in~(\ref{equ:timeaccuracy}) in Appendix~\ref{app:timeopt}. Thus, we can calculate the computation complexity, denoted by $T$, as follows.
\begin{equation}\small
T=N\times T_{in}= N\times N_{al} \times O(\frac{1}{\varepsilon_1}+\frac{1}{\varepsilon_2}),
\end{equation}
where $N_{al}$ is the time of alternations between the data transmission power optimization and the time division. We emphasize that even though the alternating data transmission power optimization and the time division process can be proved to converge, the strict closed-form relation between the $N_{al}$ and the arbitral stopping criterion given in~(\ref{equ:stoppingcriterion}) is complicated to derive. This is because the non-convexity~(non-concavity) of the original problem results in the distance between the point after each step of data transmission power optimization or time division and the final result is hardly to calculate. Thus, we left the mathematical derivation for the future work.
\par
We conclude this section as follows. We design an alternating power control and time division algorithm to solve the bargaining problem in~(\ref{equ:obj}). The outer loop in the Algorithm~\ref{alg:scheme} is the energy transmission power optimization. Properties in Remark~\ref{rem:dedicator1} and Remark~\ref{rem:dedicator2} can help to reduce the time of enumeration. The inner iteration in the Algorithm~\ref{alg:scheme} is the data transmission power optimization and time division. At each iteration, we sequentially optimize the data transmission power and deal with the time division, through the Lagrangian optimization methods. The stopping criterion guarantees the convergence of the inner iteration. \par

\section{Numerical Results}
In this section, we provide the numerical results to illustrate the properties of the problem and our alternating algorithm. We at first show the convergence of our algorithm in Fig.~\ref{simfig:convergence}. Then, we discuss the relation between objective in~(\ref{equ:obj}) and the number of dedicators in Fig.~\ref{simfig:numberofdedicators}. The sum capacity of all source-destination pairs and the residual harvested energy in the relay are illustrated in Fig.~\ref{simfig:sumcapacity} and Fig.~\ref{simfig:residualenergy}. The distribution of the utilities among dedicators and enjoyers is shown in Fig.~\ref{simfig:distributionsource}, which shows the necessity to design incentive mechanisms in the future work. The simulation parameters are given in Table~\ref{sim:simulationparameter}.\par
\begin{table*}[!t]
\renewcommand{\arraystretch}{1.0}
\caption{Simulation Parameters}\label{sim:simulationparameter} \centering
\begin{tabular}{{l|l|l}}
\hline
\hline
Network&Region Size~($L\times W$)& Constant $20$m$\times$$20$m\\
Compositions&Number of Source-Destination Pairs~($N$)& Constant $7$~~\\
~~&Distribution of Sources~~&Uniform Distribution in Square Region\\
~~&&~~~~~~~from $(0,0)$ to $(0.5L,0.5W)$~~\\
~~&Distribution of Destinations~~&Uniform Distribution in Square Region\\
~~&&~~~~~~~from $(0.5L,0.5W)$ to $(L, W)$~~\\
~~&Location of The Relay~~&Constant~$(0.5L,0.5W)$~~\\
\hline
Data~~&The Highest Transmission Power of Each Source~($P_{0}^s$)~~&Constant 10mW~~\\
And~~&The Highest Transmission Power of The Relay~($P^r_0$)~~&Constant 10mW~~\\
Energy~~&Noise Power~($n^2$)~~&Constant -95dBm~~\\
Transmission~~&Channel Gain~($\gamma$)~~&Average Passloss: Distance$^{-2}$~~\\
~~&Circuit Energy Cost Per Time Slot of The Relay~($\frac{E_0}{T}$)~~&Variable from $0$ to $0.2$mW~~\\
~~&Energy Harvesting Efficiency of The Relay~($\eta$)~~& Constant 0.5~~\\
~~&Time Division Parameter~($\theta_0$) &Constant 0.05~~\\
\hline
Algorithm&Convergence Decision Parameter~($\epsilon$)& Constant $10^{-4}$~~\\
Parameter&&\\
\hline
\hline
\end{tabular}
\end{table*}

\begin{figure}[!t]
\centering
\includegraphics[width=3.5in]{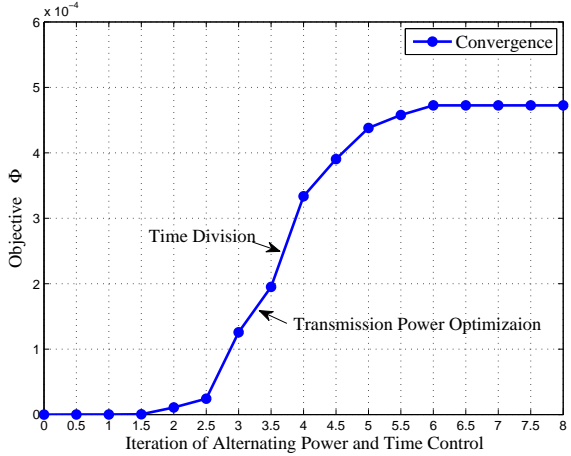}
\caption{Convergence of the iteration for data transmission power optimization and time division.}
\label{simfig:convergence}
\end{figure}
In Fig.~\ref{simfig:convergence}, we illustrate the convergence of the alternating data transmission power optimization and time division presented in Algorithm~\ref{alg:scheme}. To clearly show the process of the algorithm, we assume that all the sources are dedicators. At each iteration, the first part~(from $t$ to $t.5$) is the updating of the data transmission power~($P^r$ and $\{P^{s0}_i\}$), and the second part~(from $t.5$ to $t+1$) is the updating of the time division. We can find that with the iteration of alternating algorithm, the objective $\Phi$ keeps increasing and finally converges, even though the original problem is neither strictly convex nor strictly concave. In addition, it can be observed that the data transmission power optimization and the time division steps improve the objective function in the similar order of magnitude. This indicates that it is hardly to say which step has a stronger influence on the objective optimization.\par
\begin{figure}[!t]
\centering
\includegraphics[width=3.5in]{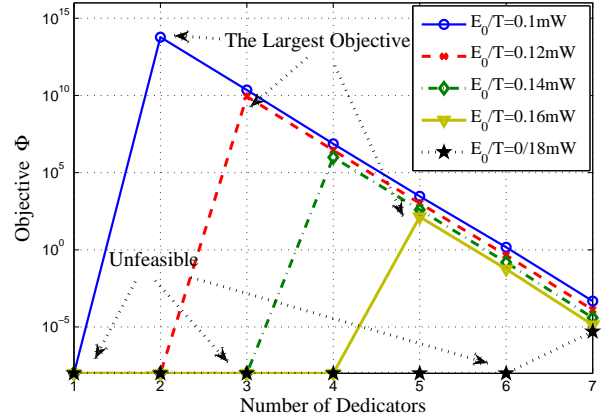}
\caption{The relation between the objective and the number of dedicators.}
\label{simfig:numberofdedicators}
\end{figure}
In Fig.~\ref{simfig:numberofdedicators}, we show the relation between the objective~($\Phi$) and the number of dedicators, when the energy for relay's decoding and recoding~($E_0$) varies. With each fixed $E_0$, we show a good representative condition for intuitive illustration according to Remark~\ref{rem:dedicator1} and Remark~\ref{rem:dedicator2}. We can find that with fixed $E_0$, the objective~$\Phi$ is infeasible when the number of dedicators is small. This is because when the number of dedicators is not enough, the transmitted energy is not enough for the relay's data transmission. However, when the number of dedicators is sufficient, less dedicators leads to a larger $\Phi$,~e.g.,~with 3 dedicators when $E_0$ equals $0.12$mW$\cdot T$. \par
\begin{figure}[!t]
\centering
\includegraphics[width=3.5in]{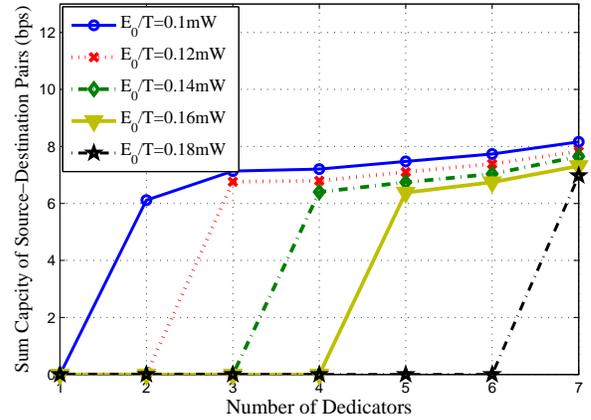}
\caption{The relation between the sum capacity and the number of dedicators.}
\label{simfig:sumcapacity}
\end{figure}
In Fig.~\ref{simfig:sumcapacity}, we show the relation between the sum capacity~($\sum_{i=1}^N C_i$) of the source-destination pairs and the number of dedicators, when the energy for relay's decoding and recoding~($E_0$) varies. We can find that with more sources acting as dedicators, the sum capacity keeps increasing. This is because with more dedicators, the relay harvests more energy with a shorter harvesting duration~($\alpha T$). Correspondingly, the relay has a longer time ($\sum_{i=1}^N(\beta_i+\gamma_i)T$) to forward data. Thus, the sum capacity will be larger. This also illustrates the TSR protocol in Fig.~\ref{fig:dftsr}, where it is better that all sources work as dedicators to take full use of the time resources~($\alpha T$), when we tend to maximize the sum capacity from a centralized perspective.\par
\begin{figure}[!t]
\centering
\includegraphics[width=3.5in]{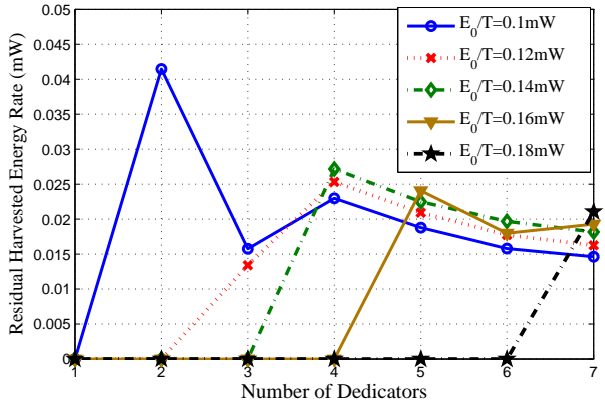}
\caption{The relation between the residual harvested energy in the relay and the number of dedicators.}
\label{simfig:residualenergy}
\end{figure}
In Fig.~\ref{simfig:residualenergy}, we show the relation between the residual harvested energy~($U_r$) in the relay and the number of dedicators. The residual harvested energy $U_r$ equals to zero when number of dedicators is not enough, which results in the harvested energy in the relay is not enough for data forwarding.
With more sources as dedicators, the residual harvested energy is not guaranteed to be increasing. This indicates that in the Nash bargaining game, to balance the utilities among sources and the relay, with more dedicators the sources require the relay to forward more data where may result in less residual harvested energy in the relay. \par
\begin{figure}[!t]
\centering
\includegraphics[width=3.5in]{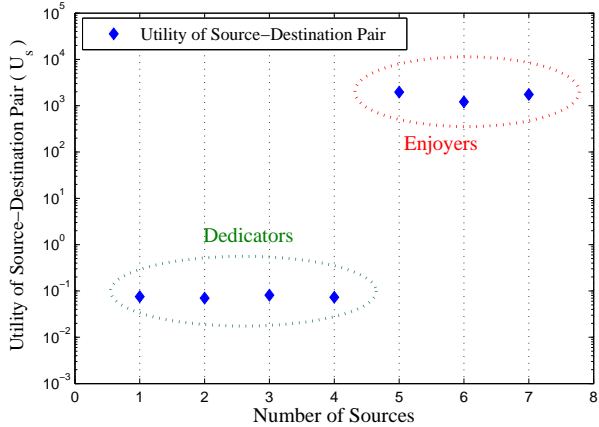}
\caption{Utility distribution among source-destination pairs.}
\label{simfig:distributionsource}
\end{figure}
Fig.~\ref{simfig:distributionsource} illustrates the utility distribution among source-destination pairs, with sources as dedicators and sources as enjoyers. We observe that the dedicators' utilities~($10^{-1}$) are dramatically lower than those of enjoyers~($10^3$). The possible reasons are given as follows. To guarantee quality of service of each source-destination pair's data transmission, it is reasonable to guarantee that each pair has a minimum duration in data transmission~($\theta_0T$) as shown in~(\ref{equ:timeconstraint}). However, the constraints in~(\ref{equ:timeconstraint}) possibly result in the utility imbalance among dedicators and enjoyers, where all sources tend to be enjoyers instead of dedicators. This indicates that the Nash bargaining solution can guarantee fairness only to a certain extent in one time block. This fairness problem will subside and diminish in the long term, since Corollary~\ref{cor:ehorder} and Remark~\ref{rem:dedicator2} result in that the all sources tend to be dedicators alternatively with their transmission channels changes. For example, when source $s_i$ has a relatively larger $|h_i|^2$, it will act as a dedicator. On the contrary, source $s_j$ will be a dedicator when having a relatively larger $|h_j|^2$ in the next time block. That is to say, enjoyers will eventually become dedicators according to Corollary~\ref{cor:ehorder} and Remark~\ref{rem:dedicator2}. However, more effective incentive mechanisms still need to be designed to guarantee fair utility distribution for more general cases, such as when some sources always act as dedicators if their channel conditions~($|h_i|^2$) are always better than others. These points are left for the future work.\par

\section{Conclusion}
In this paper, we have considered a wireless powered relay network with one relay and multiple source-destination pairs from a Nash bargaining game theoretical perspective. We have found that the problem is not a strictly concave~(convex) optimization problem, which is hard to solve. However, the problem has been proved to be simplified when it is decomposed into three parts: 1)~energy transmission power optimization, 2)~data transmission power optimization and 3)~time division. We have proved that with the Nash bargaining solution, the sources can be divided into two groups as dedicators, who do energy harvesting with the maximum power and enjoyers who do not transmit energy to the relay at all. In addition, we have proved the quasi-concavity of both transmission power control problem and transmission power control problem. Based on the analysis on the problem, we have designed an algorithm to find a suboptimal Nash bargaining solution. The convergence of the algorithm and the properties have been illustrated in the simulation results. In the simulation results, we have also shown the imbalance of utilities between dedicators and enjoyers at the Nash bargaining solution, which indicates that more effective incentive mechanisms are needed.\par

\appendices
\section{Proof of Theorem \ref{the:dedicator}}
\label{app:dedicator}
When we separately consider transmission power of energy harvesting~$P_i^{s1}$ for each source $s_i$, the problem~in~(\ref{equ:obj}) can be rewritten as a fractional programming problem, which is given as follows:
\begin{equation}
\begin{aligned}
&\max \frac{A_4+A_5(\alpha_iP_i^{s1})}{A_3(A_1\alpha_iP_i^{s1}+A_2)},\\
&~~s.t.~0\leq P_i^{s1} \leq P_0^{s},\\
\end{aligned}
\label{equ:ehmonotone}
\end{equation}
where $A_1$ to $A_5$  are independent from $P_i^{s1}$, which is given as follows:
\begin{equation}\small
\left\{
\begin{aligned}
&A_1=T,\\
&A_2=P_i^{s0}\beta_i,\\
&A_3=\frac{\prod_{k\neq i}P_k^{s1}(P_k^{s1}\alpha_kT+P_i^{s0}\beta_kT)}{\prod_{1\leq k\leq N}C_i},\\
&A_4=\eta \sum_{k\neq i}P_k^{s1}\alpha_kT-E_0.\\
&A_5=T,\\
\end{aligned}
\right.
\end{equation}
The objective function above is a monotonic function of $P_{i}^{s1}$. When $A_4<A_2$, the objective function is a monotonically increasing function of $P_i^{s1}$. The objective function is maximized only when $P_i^{s1}=P_0^{s1}$. On the other hand, when $A_4\geq A_2$, the objective function is a monotonically decreasing function of $P_i^{s1}$. The objective function is maximized only when $P_i^{s1}=0$. Hence, we have $P_{i}^{s1}$ equal to either $P_0^{s}$ or $0$ to maximize the objective in (\ref{equ:obj}). That is to say, the sources can be classified into two groups: the group of dedicators~with $P_i^{s1}=P_0^{s1}$ and the group of enjoyers~with $P_i^{s1}=0$.\par

\section{Proof of Corollary \ref{cor:ehorder}}
\label{app:ehorder}
When we set $P_i^{s1}=P_0^s$ and $P_j^{s1}=0$, the objective function~(\ref{equ:obj}) can be written as follows:
\begin{equation}
\Phi_i=\frac{B_2\alpha_i|h_i|^2P_0^s+B_1}{\beta_jP_j^{s0}(\alpha_iP_0^s+\beta_iP_i^{s0})},
\end{equation}
where $B_2>0$ and~$B_2\alpha_i|h_i|^2P_0^s+B_1>0$ according to the constraints in (\ref{equ:obj}). On the other hand, we can rewrite~(\ref{equ:obj}) as~$\Phi_j=\frac{B_2\alpha_j|h_j|^2P_0^s+B_1}{\beta_iP_i^{s0}(\alpha_jP_0^s+\beta_jP_j^{s0})}$ when setting $P_i^{s1}=0$ and $P_j^{s1}=P_0^{s1}$. In addition, according to Corollary~\ref{cor:ehtime}, we have $\alpha=\alpha_i=\alpha_j$. Then, when $\beta_i P_i^{s0}\geq \beta_j P_j^{s0}$ and $|h_i|^2\geq|h_j|^2$, we have the inequation as follows:
\begin{equation}\small
\begin{aligned}
&\Phi_i-\Phi_j\\
&=\frac{\beta_iP_i^{s0}\alpha P_0^s(B_2|h_i|^2P_0^s+B_1)-\beta_jP_j^{s0}\alpha P_0^s(B_2|h_j|^2P_0^s+B_1)}{\beta_iP_i^{s0}\beta_jP_j^{s0}(\alpha P_0^s+\beta_jP_j^{s0})(\alpha P_0^s+\beta_iP_i^{s0})}\\
&-\frac{\beta_iP_i^{s0}\beta_jP_j^{s0}\alpha P_0^{s}(|h_i|^2-|h_j|^2)}{\beta_iP_i^{s0}\beta_jP_j^{s0}(\alpha P_0^s+\beta_jP_j^{s0})(\alpha P_0^s+\beta_iP_i^{s0})}\\
&\geq\frac{\beta_iP_i^{s0}\alpha P_0^s(B_2|h_i|^2P_0^s+B_1)-\beta_jP_j^{s0}\alpha P_0^s(B_2|h_j|^2P_0^s+B_1)}{\beta_iP_i^{s0}\beta_jP_j^{s0}(\alpha P_0^s+\beta_jP_j^{s0})(\alpha P_0^s+\beta_iP_i^{s0})}\\
&\geq\frac{\beta_jP_j^{s0}\alpha P_0^s(B_2|h_i|^2P_0^s+B_1)-\beta_jP_j^{s0}\alpha P_0^s(B_2|h_j|^2P_0^s+B_1)}{\beta_iP_i^{s0}\beta_jP_j^{s0}(\alpha P_0^s+\beta_jP_j^{s0})(\alpha P_0^s+\beta_iP_i^{s0})}\\
&=\frac{\beta_j(P_j^{s0})^2\alpha P_0^sB_2(|h_i|^2-|h_j|^2)}{\beta_iP_i^{s0}\beta_jP_j^{s0}(\alpha P_0^s+\beta_jP_j^{s0})(\alpha P_0^s+\beta_iP_i^{s0})}\geq0.\\
\end{aligned}
\end{equation}
Therefore, we have the source be more likely to perform energy transmission as a dedicator~(as $s_i$ to~$s_j$), when it has better channel condition~($|h_i|^2$) to the relay, holds more time~($\beta_iT$) and larger power~($P_i^{s0}$) for data transmission.
\section{Proof of Theorem \ref{the:quasipower}}
\label{app:concavepower}
According to Lemma~\ref{lem:datatrans}, we can rewrite the problem~in~(\ref{equ:obj}) with variable $P^r$ as follows:
\begin{equation}\small
\begin{aligned}
&\max_{P^r} \Phi=\prod_{i=1}^{N}\left(\frac{\log(1+L_1^iP^r)}{L_3^i((1+L_1^iP^r)^{L_2^i}-1)+L_4^i}\right)\left(K_1-K_2P^r\right),\\
&~~s.t.~0\leq P^r\leq \min\left\{P_0^r,\{\frac{\sigma^2}{|g_i|^2}(1+\frac{P_0^s|h_i|^2}{\sigma^2})^{\frac{\beta_i}{\gamma_i}}-\frac{\sigma^2}{|g_i|^2}\}\right\},\\
\end{aligned}
\label{equ:quasipowerproof}
\end{equation}
where for any $1\leq i\leq N$, $L_1^i$ to $L_4^i$ and $K_1$ to $K_2$ are given as follows:
\begin{equation}\small
\left\{
\begin{aligned}
&L_1^i=\frac{|g_i|^2}{\sigma^2}\geq0,\\
&L_2^i=\frac{\gamma_i}{\beta_i}\geq0,\\
&L_3^i=\beta_iT\frac{\sigma^2}{{h_i}^2} \geq 0,\\
&L_4^i=\alpha_iTP_0^s\geq0,\\
&K_1=\eta\sum_{i=1}^N\alpha_iP_0^sT-E_0 \geq 0,\\
&K_2=\sum_{i=1}^N\gamma_iT\geq 0,\\
\end{aligned}
\right.
\label{equ:proofpoweralgebra}
\end{equation}
Since $\Phi$ in (\ref{equ:quasipowerproof}) is a smooth function when $P^r\geq0$, we prove the quasi-concavity of (\ref{equ:quasipowerproof}) through \textbf{Step~1}  to \textbf{Step~2} as follows.
\begin{step}
The problem in (\ref{equ:quasipowerproof}) can be decomposed as follows:
\begin{equation}\small
\begin{aligned}
&\max_i \log(\Phi)\\
&~~=\sum_{i=1}^{N}\log\left(\frac{\log(1+L_1^iP^r_i)}{L_3^i((1+L_1^iP^r_i)^{L_2^i}-1)+L_4^i}\right)+\log\left(K_1-K_2P^r_1\right),\\
&~~s.t.~0\leq P^r\leq \min\left\{P_0^r,\{\frac{\sigma^2}{|g_i|^2}(1+\frac{P_0^s|h_i|^2}{\sigma^2})^{\frac{\beta_i}{\gamma_i}}-\frac{\sigma^2}{|g_i|^2}\}\right\},\\
&~~~~~~~P^r_i=P^r_1.\\
\end{aligned}
\label{equ:quasipowerdecomposition}
\end{equation}
\end{step}
\begin{step}
At this step, we prove the decomposed problem in (\ref{equ:quasipowerdecomposition}) to be a quasi-concave problem. The quasi-concavity of the norm $\log\left(\frac{\log(1+L_1^iP^r_i)}{L_3^i((1+L_1^iP^r_i)^{L_2^i}-1)+L_4^i}\right)$ for variable $P^r_i$ is easy to test when $i\neq1$. When $i=1$, we prove the quasi-concavity of the objective $\Phi$ for variable $P^r_1$ through proving Lemma~\ref{lem:quasipowerproof1} and Lemma~\ref{lem:quasipowerproof2} as follows:
\begin{lemma}
$\nabla_{P^r_1}\Phi\geq0$ when $P^r_1=0$ and  $\nabla_{P^r_1}\Phi\leq0$ when $P^r_1=\frac{K_1}{K_2}$, where $\nabla_{P^r_1}\Phi$ is the first-order derivative of $\Phi$ for variable $P^r_1$.
\label{lem:quasipowerproof1}
\end{lemma}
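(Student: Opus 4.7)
The plan is to replace $\Phi$ by $\log\Phi$: since $\log$ is strictly increasing on $(0,\infty)$ and $\Phi>0$ throughout the interior of the feasible region, $\nabla_{P^r_1}\Phi$ and $\nabla_{P^r_1}\log\Phi$ share the same sign, and the latter decomposes additively into pieces that can be differentiated one at a time. Isolating only the summands that depend on $P^r_1$, the relevant part of $\log\Phi$ is $\log\log(1+L_1^1 P^r_1)-\log\bigl(L_3^1((1+L_1^1 P^r_1)^{L_2^1}-1)+L_4^1\bigr)+\log(K_1-K_2 P^r_1)$, and the argument reduces to evaluating the sign of the sum of these three derivatives at the two endpoints.

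At the left endpoint $P^r_1=0^{+}$, the derivative of the first summand equals $L_1^1/\bigl((1+L_1^1 P^r_1)\log(1+L_1^1 P^r_1)\bigr)$, which diverges to $+\infty$ because $\log(1+x)\sim x$ near the origin. The derivatives of the other two summands are bounded in this limit, evaluating to the finite negative values $-L_3^1 L_1^1 L_2^1/L_4^1$ and $-K_2/K_1$ respectively. The divergent positive term dominates, so $\nabla_{P^r_1}\log\Phi\to+\infty$, and hence $\nabla_{P^r_1}\Phi\geq 0$ on a right-neighbourhood of $0$.

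At the right endpoint $P^r_1=(K_1/K_2)^{-}$ the roles reverse. The derivative of the third summand, $-K_2/(K_1-K_2 P^r_1)$, diverges to $-\infty$, while the derivatives of the first two summands remain finite (the first strictly positive since $\log(1+L_1^1 K_1/K_2)>0$, the second negative by inspection). The divergent negative term dominates, yielding $\nabla_{P^r_1}\log\Phi\to-\infty$ and therefore $\nabla_{P^r_1}\Phi\leq 0$ near $K_1/K_2$.

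The only subtlety, and what I expect to be the main point requiring care, is that both endpoints actually sit on the boundary of $\{\Phi>0\}$: at $P^r_1=0$ one has $C_1=0$ and at $P^r_1=K_1/K_2$ one has $U^R=0$, so the statements are most cleanly phrased as one-sided limits rather than pointwise derivatives at the boundary. Once the limits above are established, the companion lemma (Lemma~3, to follow) supplies the remaining unimodality needed to conclude quasi-concavity from the intermediate value theorem applied to $\nabla_{P^r_1}\log\Phi$.
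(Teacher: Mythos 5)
Your proposal is correct and follows essentially the same route as the paper: the paper likewise writes $\nabla_{P^r_1}\Phi=\Phi\cdot(\zeta_1-\tfrac{K_2}{K_1-K_2P^r_1})$, i.e.\ factors out $\Phi$ so that only the derivative of $\log\Phi$ matters, and then asserts the endpoint signs, which are driven by exactly the two divergences you identify ($\tfrac{L_1^1}{(1+L_1^1P^r_1)\log(1+L_1^1P^r_1)}\to+\infty$ at $P^r_1=0^+$ and $-\tfrac{K_2}{K_1-K_2P^r_1}\to-\infty$ at $P^r_1=(K_1/K_2)^-$). Your version actually carries out the endpoint evaluation that the paper dismisses as ``easy to test,'' and your remark that both endpoints lie on the boundary of $\{\Phi>0\}$, so the claims are properly one-sided limits, is a worthwhile clarification the paper omits.
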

\begin{proof}
$\nabla_{P^r_1}\Phi$ is calculated as follows:
\begin{equation}\small
\begin{aligned}
\nabla_{P^r_1}\Phi=\Phi\cdot&\Bigg(\frac{L_1^1}{(1+L_1^1P^r)\log(1+L_1^1P^r_1)}\\
&-\frac{L_3^1L_2^1L_1^1(1+L_1^1P^r_1)^{L_2^1}}{(L_3^1((1+L_1^1P^r_1)^{L_2^1}-1)+L_4^1)(1+L_1^1P^r_1)}\\
&-\Phi\cdot\frac{K_2}{K_1-K_2P^r_1}\Bigg)\\
\end{aligned}
\end{equation}
It is easy to test that $\nabla_{P^r_1}\Phi\geq0$ when $P^r=0$ and  $\nabla_{P^r_1}\Phi\leq0$ when $P^r=\frac{D_1}{D_2}$.
\end{proof}
\end{step}

\begin{lemma}
There does not exist any local minimum inflection point of $\Phi$ when $0< P^r_1\leq\frac{K_1}{K_2}$. That is to say, $\nabla^2_{P^r_1}\Phi\geq0$ guarantees $\nabla_{P^r_1}\Phi\leq 0$, where $\nabla^2_{P^r_2}\Phi\geq0$ is the second-order derivative function of $\Phi$.
\label{lem:quasipowerproof2}
\end{lemma}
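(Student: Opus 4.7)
The plan is to prove the logically equivalent contrapositive: if $\nabla_{P^r_1}\Phi(x) > 0$ at some interior $x \in (0, K_1/K_2)$, then $\nabla^2_{P^r_1}\Phi(x) < 0$. Combined with Lemma~\ref{lem:quasipowerproof1}, this precludes any interior local minimum of $\Phi$, yielding the desired quasi-concavity.

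I would pass to the log-transform $f(x) := \log \Phi(x) = \log N(x) - \log D(x) + \log(K_1 - K_2 x)$, where $N(x) = \log(1+L_1^1 x)$ and $D(x) = L_3^1[(1+L_1^1 x)^{L_2^1}-1] + L_4^1$. Since $\Phi(x) > 0$ on the feasible interval, we have $\nabla_{P^r_1}\Phi = \Phi \cdot f'$ and $\nabla^2_{P^r_1}\Phi = \Phi \cdot [f'' + (f')^2]$, so the target becomes: $f'(x) > 0 \;\Rightarrow\; -f''(x) > (f'(x))^2$.

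Direct differentiation of each summand of $f$, followed by expansion of $(f')^2$ and use of the identity $\frac{D'}{D} - \frac{N'}{N} + \frac{K_2}{K_1 - K_2 x} = -f'$ to cancel the pure-square cross terms, should yield the compact expression
\[
-f'' - (f')^2 \;=\; -\frac{N''}{N} + \frac{D''}{D} + \frac{2 D'}{D}\,f' + \frac{2 N'\,K_2}{N\,(K_1 - K_2 x)}.
\]
Three of the four terms are immediately non-negative: $-N''/N > 0$ since $N$ is strictly concave and positive on $(0, K_1/K_2)$; $2(D'/D)\,f' > 0$ under the hypothesis $f' > 0$ (noting $D' > 0$); and $2 N' K_2 / [N(K_1 - K_2 x)] > 0$. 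The only potentially negative term is $D''/D$, which satisfies the clean identity $D''/D = (L_2^1-1)\,\frac{L_1^1}{1+L_1^1 x}\cdot \frac{D'}{D}$, and so is negative precisely when $L_2^1 < 1$.

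The main obstacle will be handling the regime $L_2^1 < 1$. The strategy is to absorb $|D''/D| = (1-L_2^1)\,\frac{L_1^1}{1+L_1^1 x}\cdot \frac{D'}{D}$ into a controlled fraction of the strictly positive contributions. Since $N' = L_1^1/(1+L_1^1 x)$, the factor $L_1^1/(1+L_1^1 x)$ can be pulled out and matched against the explicit $-N''/N$ term (which carries $[L_1^1/(1+L_1^1 x)]^2$ and is itself quadratic in that factor), while the coupling to $D'/D$ is controlled by the strictly positive $2(D'/D)\,f'$ term under $f'>0$. Once this elementary but tedious algebraic bound is established, the inequality $-f'' - (f')^2 > 0$ follows on all of $(0, K_1/K_2)$ where $f'>0$, completing the proof.
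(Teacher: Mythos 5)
Your proposal is correct and completable, but it organizes the argument differently from the paper. You prove the contrapositive ($f'>0\Rightarrow f''+(f')^2<0$ for $f=\log\Phi$), and your displayed identity $-f''-(f')^2=-\frac{N''}{N}+\frac{D''}{D}+\frac{2D'}{D}f'+\frac{2N'K_2}{N(K_1-K_2x)}$ checks out; moreover the absorption step you defer as tedious is in fact immediate: since $-\frac{N''}{N}=\frac{L_1^1}{1+L_1^1x}\cdot\frac{N'}{N}$ while $\left|\frac{D''}{D}\right|=(1-L_2^1)\frac{L_1^1}{1+L_1^1x}\cdot\frac{D'}{D}\leq\frac{L_1^1}{1+L_1^1x}\cdot\frac{D'}{D}$, the hypothesis $f'>0$ gives $\frac{N'}{N}>\frac{D'}{D}$ and hence $-\frac{N''}{N}+\frac{D''}{D}>0$ outright, after which the two remaining positive terms finish the job. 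The paper instead argues the implication directly: it pivots on $\zeta_1=\frac{N'}{N}-\frac{D'}{D}$ (so that $\nabla_{P^r_1}\Phi=\Phi\bigl(\zeta_1-\frac{K_2}{K_1-K_2P^r_1}\bigr)$) and derives an exact expression for $\nabla^2_{P^r_1}\Phi$ of the form $-\Phi\,\xi_1\zeta_1$ minus a term that is manifestly nonpositive for every value of $L_2^1$, with $\xi_1>0$; then $\nabla^2_{P^r_1}\Phi\geq0$ forces $\zeta_1\leq0$ and hence $\nabla_{P^r_1}\Phi\leq\Phi\zeta_1\leq0$, with no case split on $L_2^1$. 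The two routes are the same differentiation grouped around different pivots ($f'$ versus $\zeta_1=f'+\frac{K_2}{K_1-K_2P^r_1}$): the paper's grouping buys a case-free one-line conclusion, while yours makes the role of the hypothesis $f'>0$ more transparent and, once the elementary bound above is inserted, is equally rigorous.
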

\begin{proof}
We rewrite $\nabla_{P^r_1}\Phi$ as follows:
\begin{equation}\small
\left\{
\begin{aligned}
\nabla_{P^r_1}\Phi&=\Phi\cdot\left(\zeta_1-\frac{K_2}{K_1-K_2P^r_1}\right),\\
\zeta_1&=\frac{L_1^1}{(1+L_1^1P^r)\log(1+L_1^1P^r_1)}\\
&-\frac{L_3^1L_2^1L_1^1(1+L_1^1P^r_1)^{L_2^1}}{(L_3^1((1+L_1^1P^r_1)^{L_2^1}-1)+L_4^1)(1+L_1^1P^r_1)}.\\
\end{aligned}
\right.
\end{equation}
Thus, we calculate the $\nabla^2_{P^r_1}\Phi$ as follows:
\begin{equation}\small
\begin{split}
\nabla^2_{P_1^r}\Phi=&\Phi\cdot\left(\zeta_1-\frac{K_2}{K_1-K_2P^r_1}\right)^2\\
&-\Phi\cdot\nabla_{P^r_1}\left(\zeta_1-\frac{K_2}{K_1-K_2P^r_1}\right)\\
=&-\Phi\cdot\frac{L_1^1}{1+L_1^1P^r_1}\zeta_1\\
&-2\Phi\cdot\frac{L_3^1L_2^1L_1^1(1+L_1^1P^r_1)^{L_2^1}}{(L_3^1((1+L_1^1P^r_1)^{L_2^1}-1)+L_4^1)(1+L_1^1P^r_1)}\zeta_1\\
&-2\Phi\cdot\frac{K_2}{K_1-K_2P^r_1}\zeta_1\\
&-\Phi\cdot\frac{L_3^1(L_2^1)^2(L_1^1)^2(1+L_1^1P^r_1)^{L_2^1}}{(L_3^1((1+L_1^1P^r_1)^{L_2^1}-1)+L_4^1)(1+L_1^1P^r_1)^2}\\
\leq&-\Phi\cdot\frac{L_1^1}{1+L_1^1P^r_1}\zeta_1\\
&-2\Phi\cdot\frac{L_3^1L_2^1L_1^1(1+L_1^1P^r_1)^{L_2^1}}{(L_3^1((1+L_1^1P^r_1)^{L_2^1}-1)+L_4^1)(1+L_1^1P^r_1)}\zeta_1\\
&-2\Phi\cdot\frac{K_2}{K_1-K_2P^r_1}\zeta_1\\
=&-\Phi\cdot\xi_1\cdot\zeta_1,\\
\end{split}
\label{equ:lem41}
\end{equation}
where $\xi_1>0$ is given as follows:
\begin{equation}\small
\begin{aligned}
\xi_1&=\frac{L_1^1}{1+L_1^1P^r_1}\zeta_1+2\cdot\frac{L_3^1L_2^1L_1^1(1+L_1^1P^r_1)^{L_2^1}}{(L_3^1((1+L_1^1P^r_1)^{L_2^1}-1)+L_4^1)(1+L_1^1P^r_1)}\\
&+2\cdot\frac{K_2}{K_1-K_2P^r_1}.\\
\label{equ:lem42}
\end{aligned}
\end{equation}
When $\nabla^2_{P^r_2}\Phi\geq0$, we have $-\Phi\cdot\xi_1\cdot\zeta_1\geq0$. Since $\Phi>0$ and $\xi_1>0$, we have $\zeta_1\leq0$. Therefore, we have the inequation as follows:
\begin{equation}
\begin{aligned}
\nabla_{P^r_1}\Phi&=\Phi\cdot(\zeta_1-\frac{K_2}{K_1-K_2P^r_1})\leq \Phi\cdot\zeta_1 \leq0.\\
\end{aligned}
\end{equation}
Hence, we conclude that $\nabla^2_{P^r_2}\Phi\geq0$ guarantees $\nabla_{P^r_1}\Phi\leq 0$.
\end{proof}
According to the decomposition in \textbf{Step~1}, Lemma~\ref{lem:quasipowerproof1} and Lemma~\ref{lem:quasipowerproof2} in \textbf{Step~2} and the linear constraint in~(\ref{equ:quasipowerproof}), we conclude that the problem in~(\ref{equ:obj}) is a quasi-concave problem with variables $\{P_i^{s0}\}$ and $P^r$.

\section{Proof of Theorem \ref{the:concavetime}}
\label{app:concavetime}
According to Lemma~\ref{lem:datatrans} and Corollary~\ref{cor:ehtime}, we can rewrite the problem~in~(\ref{equ:obj}) with variables $\alpha$ and $\{\gamma_i\}$ as follows:
\begin{equation}\small
\begin{aligned}
&\max_{\{\alpha_i\},\{\gamma_i\}} \Phi=\left(\prod_{P_i^{s1}=P_0^s} \frac{D_1^i\gamma_i}{D_2^i\alpha+D_3^i\gamma_i}\right)\cdot(F_1\alpha-F_2-\sum_{i=1}^ND_4^i\gamma_i),\\
&~~s.t.~\alpha+\sum_{i=1}^N(\gamma_i+D_5^i\gamma_i)=1,\\
&~~~~~~~(\gamma_i+D_5^i\gamma_i)T\geq\theta_0T,\\
\end{aligned}
\label{theproof:timequasi}
\end{equation}
where $\{D_1^i\}$ to $\{D_5^i\}$ and $F_1$ to $F_2$ are positive, which are given as follows:
\begin{equation}
\left\{
\begin{aligned}
&D_1^i=T\log\left(1+P^r\frac{|g_i|^2}{\sigma^2}\right),\\
&D_2^i=P_i^{s1}T,\\
&D_3^i=P_i^{s0}T\frac{\log(1+P^r|g_i|^2/\sigma^2)}{\log(1+P_i^{s0}|h_i|^2/\sigma^2)},\\
&D_4^i=P^rT,\\
&D_5^i=\frac{\log(1+P^r|g_i|^2/\sigma^2)}{\log(1+P_i^{s0}|h_i|^2/\sigma^2)},\\
&F_1=\eta\sum_{P_i^{s1}=P_0^s}P_0^{s}T|h_i|^2,\\
&F_2=E_0,\\
\end{aligned}
\right.
\label{equ:prooftimealgebra}
\end{equation}
where $D_1^i$ to $D_5^i$ and $F_1$ to $F_2$ are independent from $\alpha$ and $\{\gamma_i\}$ and $\beta_i=D_5^i\gamma_i$.
Since $\gamma_i$ is separable in~(\ref{theproof:timequasi}), it is easy to test that $\Phi$ is a concave function for each $\gamma_i$. Then, the constraint in~(\ref{theproof:timequasi}) are linear constraints. Thus, the problem~in~(\ref{theproof:timequasi}) is a quasi-concave problem when $\Phi$ is a quasi-concave function of $\alpha$, which we prove it in Lemma~\ref{lem:quasialpha} as follows:
\begin{lemma}
$\nabla_\alpha^2\Phi\geq0$ guarantees $\nabla^1_\alpha\Phi<0$, where $\nabla_\alpha^i$ is the $i^{th}$-order derivative with variable $\alpha$. That is to say, $\Phi$ has no local minimum inflection point of $\alpha$, which indicates that the problem~(\ref{theproof:timequasi}) is a quasi-concave problem of $\alpha$.
\label{lem:quasialpha}
\end{lemma}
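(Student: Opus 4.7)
The plan is to reduce the claim to a single algebraic inequality between $H(\alpha)$ and $H'(\alpha)$, where $H$ is essentially the logarithmic derivative of the product factor in $\Phi$. First I would write $\Phi(\alpha)=f(\alpha)\,g(\alpha)$, with
\[
f(\alpha)=\prod_{P_i^{s1}=P_0^s}\frac{D_1^i\gamma_i}{D_2^i\alpha+D_3^i\gamma_i},\qquad g(\alpha)=F_1\alpha-F_2-\sum_{i=1}^N D_4^i\gamma_i.
\]
The factor $f$ is positive and strictly decreasing in $\alpha$, while $g$ is affine and strictly increasing; both are positive throughout the feasible region because the constraints $U_i^S>0$ and $U^R>0$ in~(\ref{equ:obj}) are enforced. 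This sign information is what makes the argument close at the end.

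Next, introducing
\[
H(\alpha)=\sum_{P_i^{s1}=P_0^s}\frac{D_2^i}{D_2^i\alpha+D_3^i\gamma_i}>0,\qquad H'(\alpha)=-\sum_{P_i^{s1}=P_0^s}\frac{(D_2^i)^2}{(D_2^i\alpha+D_3^i\gamma_i)^2}<0,
\]
logarithmic differentiation gives $f'=-Hf$, so that
\[
\nabla_\alpha\Phi=f(F_1-Hg),\qquad \nabla_\alpha^2\Phi=f\bigl[-2HF_1+(H^2-H')\,g\bigr].
\]
The key algebraic input is the elementary inequality $H^2\ge -H'$, obtained by expanding $H^2$ as a sum of squares plus twice a sum of positive cross-terms, and observing that the diagonal part coincides with $-H'$.

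The main argument is then by contrapositive. Assume $\nabla_\alpha\Phi\ge 0$, i.e., $F_1\ge Hg$. Since $-2H<0$, this yields $-2HF_1\le -2H^2g$; substituting into the expression for $\nabla_\alpha^2\Phi$ gives
\[
\nabla_\alpha^2\Phi\le f\bigl[-2H^2g+(H^2-H')g\bigr]=-fg\bigl[H^2-(-H')\bigr]\le 0,
\]
which contradicts $\nabla_\alpha^2\Phi>0$ and proves the lemma. The only non-routine step is spotting the packaging that lets the two terms of $\nabla_\alpha^2\Phi$ telescope against $H^2\ge -H'$; beyond that, the proof is direct differentiation together with the positivity of $f$ and $g$ imposed by the constraints of~(\ref{equ:obj}). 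A minor edge case arises when $|\{i:P_i^{s1}=P_0^s\}|=1$, in which $H^2=-H'$ identically and the bound becomes $\nabla_\alpha^2\Phi\le 0$ with equality only at $\nabla_\alpha\Phi=0$; this is handled by noting that such a point is not a strict local minimum, which is all that is needed for quasi-concavity of $\Phi$ in $\alpha$.
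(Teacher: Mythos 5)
Your proof is correct and follows essentially the same route as the paper's: both compute $\nabla^1_\alpha\Phi$ and $\nabla^2_\alpha\Phi$ by logarithmic differentiation of the product-times-affine form of $\Phi$, and both close the argument with the elementary inequality $\left(\sum_i\mu_i\right)^2\ge\sum_i\mu_i^2$ for positive terms, which is exactly your $H^2\ge -H'$. Your explicit handling of the single-dedicator edge case (where the inequality degenerates to equality) is a small refinement over the paper, which silently uses a strict inequality that only holds with at least two dedicators, but the substance of the two arguments is identical.
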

\begin{proof}
We can calculate $\nabla_\alpha^2\Phi$ and $\nabla^1_\alpha\Phi$, respectively, as follows:
\begin{equation}\small
\left\{
\begin{aligned}
&\nabla^1_\alpha\Phi=\left(-\sum_{P_i^{s1}=P_0^s}\mu_i+\nu\right)\cdot\Phi,\\
&\nabla^2_\alpha\Phi=\left(-\sum_{P_i^{s1}=P_0^s}\mu_i+\nu\right)^2\cdot\Phi+\left(\sum_{P_i^{s1}=P_0^s}(\mu_i)^2-\nu^2\right)\cdot\Phi,\\
&\mu_i=\frac{G_1^i}{G_1^i\alpha+G_2^i}>0,~~\nu=\frac{H_1}{H_1\alpha-H_2}>0,\\
\end{aligned}
\right.
\end{equation}
where $\{G_1^i\}$, $\{G_2^i\}$, $H_1$ and $H_2$ are independent from $\alpha$, which are given as follows:
\begin{equation}
\left\{
\begin{aligned}
&G_1^i=D_2^i=D_2^i=P_i^{s1}T,\\
&G_2^i=D_3\gamma_i=P_i^{s0}\gamma_iT\frac{\log(1+P_i^{s0}|h_i|^2/\sigma^2)}{\log(1+P^r|g_i|^2/\sigma^2)},\\
&H_1=F_1=\eta\sum_{P_i^{s1}=P_0^s}P_0^{s}T|h_i|^2,\\
&H_2=F_2+\sum_{k=1}^ND_4^k\gamma_k=E_0+\sum_{k=1}^ND_4^k\gamma_k.\\
\end{aligned}
\right.
\end{equation}
Therefore, when $\nabla^2_\alpha\Phi\geq0$, we have the inequation as follows:
\begin{equation}\small
\begin{aligned}
&0\leq \nabla^2_\alpha\Phi\\
&=\left(\left(\sum_{P_i^{s1}=P_0^s}\mu_i\right)^2+\sum_{P_i^{s1}=P_0^s}(\mu_i)^2-2\left(\sum_{P_i^{s1}=P_0^s}\mu_i\right)\nu\right)\Phi\\
&<\left(2\left(\sum_{P_i^{s1}=P_0^s}\mu_i\right)^2-2\left(\sum_{P_i^{s1}=P_0^s}\mu_i\right)\nu\right)\Phi\\
&=-2\left(\sum_{P_i^{s1}=P_0^s}\mu_i\right)\cdot\nabla^1_\alpha \Phi\Rightarrow\nabla^1_\alpha \Phi<0.\\
\end{aligned}
\end{equation}
Thus, the problem~in~(\ref{theproof:timequasi}) is a quasi-concave problem of $\alpha$. Therefore, the problem in (\ref{equ:obj}) is a quasi-concave problem with variables $\alpha$, $\{\beta_i\}$, and $\{\gamma_i\}$.\par
\end{proof}

\section{Gradient Ascent for Data Transmission Power Optimization}
\label{app:poweropt}
The Lagrangian function for data transmission power optimization problem is given as follows:
\begin{equation}\small
\begin{aligned}
&\mathcal{L}_1^k(P^r,\rho_1,\rho_2)\\
&=-\prod_{i=1}^{N}\left(\frac{\log(1+L_1^iP^r)}{L_3^i((1+L_1^iP^r)^{L_2^i}-1)+L_4^i}\right)\left(K_1-K_2P^r\right)\\
&+\rho_1(0-P^r)+\rho_2(P^r-[P_0^r]^*).\\
\end{aligned}
\end{equation}
Thus ,the dual function is calculated as follows~\cite{convex-2004}\cite{convex-2011}:
\begin{equation}\small
\begin{aligned}
&g_1^k(\rho_1,\rho_2)=\inf_{P^r}\mathcal{L}_1^k(P^r,\rho_1,\rho_2)\\
&=\prod_{i=1}^{N}\left(\frac{\log(1+L_1^i[P^r]^\sharp)}{L_3^i((1+L_1^i[P^r]^\sharp)^{L_2^i}-1)+L_4^i}\right)\left(K_1-K_2[P^r]^\sharp\right)\\
&~~-\rho_2[P_0^r]^*,\\
\end{aligned}
\end{equation}
where $[P^r]^{\sharp}$ equals to $-\rho_2$.\footnote{Since all variables and all parts in the Lagrangian function are real, we eliminate all the conjugate operations.}\par
The dual problem is as follows:
\begin{equation}
\max_{\rho_1,\rho_2}g_1^k(\rho_1,\rho_2),
\end{equation}
with variables $\rho_1$ and $\rho_2$. We can achieve a dual optimal point $\{\rho_1^*,\rho_2^*\}$ through solve the dual problem above.
Since the data transmission power optimization problem in (\ref{equ:quasipowermain}) is a quasi-concave optimization problem, instead of a strict concave optimization problem, we can only 'possibly' obtain
a primal optimal point $(P^r)^*$ from a dual optimal point $\{\rho_1^*,\rho_2^*\}$ as
\begin{equation}
(P^r)^*=\mbox{arg}\min_{P^r} L_1^k(P^r,\rho_1^*,\rho_2^*).\\
\label{equ:lagpowerdetial}
\end{equation}
Since the $g_1^k$ is differentiable when $\rho_1>0$ and $\rho_2>0$, we can update the primal variable $P^r$ and the Lagrange multipliers $\rho_1$ and $\rho_2$ as follows~\cite{convex-2004}:
\begin{equation}
\left\{
\begin{aligned}
&(P^r)^{t+1}=\mbox{arg}\min_{P^r}L_1^k(P^r,\rho_1^t,\rho_2^t),\\
&\rho_1^{t+1}=\rho_1^{t}+\varsigma_1^{t}\left(0-(P^r)^{t+1}\right),\\
&\rho_2^{t+1}=\rho_2^{t}+\varsigma_2^{t}\left((P^r)^{t+1}-[P_0^r]^*\right),\\
\end{aligned}
\right.
\end{equation}
where $\varsigma_1^{t}>0$ and $\varsigma_2^{t}>0$ are the step variables for the $t^{th}$ step. \par
The iteration ends (the gradient ascent algorithm converges) at $t_1^{th}$ iteration when the primal variable $P^r$ satisfies the linear stopping criterion~\cite{convex-2004}, which is given as follows:
\begin{equation}\small
|(P^r)^{t_1+1}-(P^r)^{t_1}|<\varepsilon_1,
\label{equ:poweraccuracy}
\end{equation}
where $\varepsilon_1$ is a minor parameter to determine the convergence. \par Finally, we obtain the optimal $(P^r)^*$ in~(\ref{equ:lagpowerdetial}) as $(P^r)^*=(P^r)^{t_1+1}$. We emphasize that since the data transmission power optimization problem is not strictly concave, we can obtain a suboptimal solution with dual ascent method.\par
\vspace{-2mm}
\section{Gradient Ascent for Time Division}
\label{app:timeopt}
Similarly with the gradient ascent method in Appendix~\ref{app:poweropt}, we can update the time division variables $\alpha$, $\{\gamma_i\}$ and the corresponding Lagrange multipliers $\varrho_1$ and $\{\varrho_2^i\}$ in~(\ref{equ:lagtime}) as follows:
\begin{equation}
\left\{
\begin{aligned}
&(\alpha,\{\gamma_i\})^{t+1}=\mbox{arg}\min_{\alpha,\{\gamma_i\}}L_2^k(\alpha,\{\gamma_i\},\varrho_1^t,(\{\varrho_2^i\})^{t}),\\
&\varrho_1^{t+1}=\varrho_1^{t}+\tau_1^{t}\left(\alpha^{t+1}+\sum_{i=1}^N(\gamma_i^{t+1}+D_5^i\gamma_i^{t+1})-1\right),\\
&(\varrho_2^i)^{t+1}=(\varrho_2^i)^{t}+(\tau_2^i)^{t}\cdot\left(\theta_0T-(\gamma_i^{t+1}+D_5^i\gamma_i^{t+1})T\right),\\
\end{aligned}
\right.
\end{equation}
where $\tau_1^{t}>0$ and $\{(\tau_2^i)^{t}>0\}$ are the step variables for the $t^{th}$ step. Similar with the gradient ascent method in Appendix~\ref{app:poweropt}, we can set a primal stopping criterion when the time division converges at $t_2^{th}$ iteration, which is given as follows:
\begin{equation}
|\alpha^{t_2+1}-\alpha^{t_2}|<\varepsilon_2,
\label{equ:timeaccuracy}
\end{equation}
where $\varepsilon_2$ is a minor parameter to determine the convergence. \par
To avoid misleading the time division process, we obtain the $(\alpha,\{\gamma_i\})^{t+1}$ at each $t^{th}$ iteration, through alternatively update of $\alpha$ and $\{\gamma_i\}$, which is given as follows:
\begin{equation}
\left\{
\begin{aligned}
&\alpha^{\vartheta+1}=\alpha^{\vartheta}-(\kappa_1)^{\vartheta}\nabla_{\alpha} L_2^k(\alpha,(\{\gamma_i\})^{\vartheta},\varrho_1^t,(\{\varrho_2^i\})^{t}),\\
&\gamma_i^{\vartheta+1}=\gamma_i^{\vartheta}-(\kappa_2^i)^{\vartheta}\nabla_{\gamma_i} L_2^k(\alpha^{\vartheta},(\{\gamma_j\})_{j\neq i}^{\vartheta},\gamma_i, \varrho_1^t,(\{\varrho_2^i\})^{t}),\\
\end{aligned}
\right.
\end{equation}
where $(\kappa_1)^{\vartheta}>0$ and $\{(\kappa_2^i)^{\vartheta}\}>0$ are the step size. To guarantee the constraints in (\ref{equ:quasitimemain}) are always satisfied, the following equation must be satisfied at $\vartheta^{th}$ update:
\begin{equation}\small
\left\{
\begin{aligned}
&(\kappa_1)^{\vartheta}\cdot\nabla_{\alpha} L_2^k(\alpha,(\{\gamma_i\})^{\vartheta},\varrho_1^t,(\{\varrho_2^i\})^{t})\\
&+\sum_{i=1}^N (\kappa_2^i)^{\vartheta}\cdot\nabla_{\gamma_i} L_2^k(\alpha^{\vartheta},(\{\gamma_j\})_{j\neq i}^{\vartheta},\gamma_i, \varrho_1^t,(\{\varrho_2^i\})^{t})=0,\\
&\gamma_i^{\vartheta}-(\kappa_2^i)^{\vartheta}\nabla_{\gamma_i} L_2^k(\alpha^{\vartheta},(\{\gamma_j\})_{j\neq i}^{\vartheta},\gamma_i, \varrho_1^t,(\{\varrho_2^i\})^{t})>\frac{\theta_0}{1+D_5^i}.\\
\end{aligned}
\right.
\end{equation}
\par
When the update ends at $\vartheta_1^{th}$ update, we have
\begin{equation}
\left\{
\begin{aligned}
&\alpha^{t+1}=\alpha^{\vartheta_1+1},\\
&\gamma_i^{t+1}=\gamma_i^{\vartheta_1+1},~\forall 1\leq i\leq N.\\
\end{aligned}
\right.
\end{equation}
\par
When the gradient ascent method converges at $t_2^{th}$ iteration, we can obtain the optimal values as $\alpha^{\star}$ and $(\{\gamma_i\})^{\star}$ for the time division problem in~(\ref{equ:quasitimemain}), which is as follows:
\begin{equation}
\left\{
\begin{aligned}
&\alpha^{\star}=\alpha^{t_2+1},\\
&(\{\gamma_i\})^{\star}=(\{\gamma_i\})^{t_2+1}.\\
\end{aligned}
\right.
\end{equation}
\par

\end{document}